\documentclass[acmsmall]{acmart}
\usepackage{float} %
\usepackage{booktabs} %
\usepackage{amsmath}
\usepackage{bm}
\usepackage{subfigure}
\usepackage{amsmath}
\usepackage[linesnumbered, boxed,ruled,lined]{algorithm2e}

\usepackage{amsmath}
\usepackage{multirow}
\usepackage{threeparttable}
\usepackage{graphicx}
\usepackage{verbatim}
\usepackage{color}
\usepackage{booktabs}
\usepackage{array}

\setcopyright{rightsretained}
\AtBeginDocument{%
  \providecommand\BibTeX{{%
    \normalfont B\kern-0.5em{\scshape i\kern-0.25em b}\kern-0.8em\TeX}}}

\setcopyright{acmcopyright}
\copyrightyear{2019}
\acmDOI{xx.xxxx/xxxxxxx.xxxxxxx}

\acmJournal{TOSN}
\acmVolume{X}
\acmNumber{XXX}
\acmArticle{1}
\acmMonth{9}
\acmYear{2019}

\begin{document}

\title{Computation Offloading with Multiple Agents in Edge Computing-supported IoT}

\author{Shihao Shen}
\authornotemark[2]
\affiliation{%
	\institution{Tianjin University}
	\city{Tianjin}
	\country{China}}
\email{shenshihao@tju.edu.cn}

\author{Yiwen Han}
\authornotemark[2]
\affiliation{%
	\institution{Tianjin University}
	\city{Tianjin}
	\country{China}}
\email{hanyiwen@tju.edu.cn}

\author{Xiaofei Wang}
\authornote{The corresponding author.}

\authornote{Tianjin Key Laboratory of Advanced Networking (TANK), College of Intelligence and Computing, Tianjin University.}
\affiliation{%
  \institution{Tianjin University}
  \city{Tianjin}
  \country{China}
}
\email{xiaofeiwang@tju.edu.cn}

\author{Yan Wang}
\authornote{The National Key Laboratory of Underwater Acoustic Science and Technology, College of Underwater Acoustic Engineering, Harbin Engineering University.}
\affiliation{%
	\institution{Harbin Engineering University}
	\city{Harbin}
	\country{China}
}
\email{wangyan@hrbeu.edu.cn}

\renewcommand{\shortauthors}{Shihao Shen, et al.}

\thanks{This work is supported by the National Key R$\& $D Program (2018YFC0809803, 2019YFB2101901) of China, China NSFC (Youth) through grant 61702364, Huawei Innovation Research Program (HO2018095224) and the Opening Fund of Acoustics Science and Technology Laboratory(Grant No.SSKF2018002). This paper is an extension of our conference paper published in the ACM TURC (DOI: 10.1145/3321408.3321586). The authors extend the previous work by improving the expression of parameters, adding complexity analysis to the problem, discussing the theoretical analysis of the proposed algorithm and adding multi-angle fine-grained related experiments.
	The authors would like to thank the anonymous referees for their valuable comments and helpful suggestions.
	\noindent\rule[0.25\baselineskip]{13.9cm}{0.5pt}}

\begin{abstract}
	With the development of the Internet of Things (IoT) and the birth of various new IoT devices, the capacity of massive IoT devices is facing challenges. Fortunately, edge computing can optimize problems such as delay and connectivity by offloading part of the computational tasks to edge nodes close to the data source. Using this feature, IoT devices can save more resources while still maintaining the quality of service. However, since computation offloading decisions concern joint and complex resource management, we use multiple Deep Reinforcement Learning (DRL) agents deployed on IoT devices to guide their own decisions. Besides, Federated Learning (FL) is utilized to train DRL agents in a distributed fashion, aiming to make the DRL-based decision making practical and further decrease the transmission cost between IoT devices and Edge Nodes. In this paper, we first study the problem of computation offloading optimization and prove the problem is an NP-hard problem. Then, based on DRL and FL, we propose an offloading algorithm that is different from the traditional method. Finally, we studied the effects of various parameters on the performance of the algorithm and verified the effectiveness of both the DRL and FL in the IoT system.
\end{abstract}

\begin{CCSXML}
	<ccs2012>
	<concept>
	<concept_id>10003033.10003106.10003113</concept_id>
	<concept_desc>Networks~Mobile networks</concept_desc>
	<concept_significance>500</concept_significance>
	</concept>
	<concept>
	<concept_id>10003120.10003138.10003139.10010905</concept_id>
	<concept_desc>Human-centered computing~Mobile computing</concept_desc>
	<concept_significance>500</concept_significance>
	</concept>
	<concept>
	<concept_id>10010147.10010178.10010219.10010223</concept_id>
	<concept_desc>Computing methodologies~Cooperation and coordination</concept_desc>
	<concept_significance>300</concept_significance>
	</concept>
	</ccs2012>
\end{CCSXML}

\ccsdesc[500]{Networks~Mobile networks}
\ccsdesc[500]{Human-centered computing~Mobile computing}
\ccsdesc[300]{Computing methodologies~Cooperation and coordination}

\keywords{Federated learning, computation offloading, IoT, edge computing}

\maketitle
\section{Introduction}

The Internet of Things(IoT) is an extension of the Internet. It can connect things to the network for communication by using sensors to realize intelligent management, positioning, identification, monitoring, and other functions. At present, IoT is rapidly emerging and has driven the vigorous development of various related application services. It has been applied in many fields such as smart home~\cite{jie2013smart}, human health detection~\cite{islam2015internet}, disaster management~\cite{kamruzzaman2017study}, building structure safety~\cite{hu2013wireless}, person identificationand~\cite{khalil2018sonicdoor} and so on. 

As shown in Fig. \ref{fig:app_iot}, IoT has been closely integrated with various fields. However, various application services need to deploy a large number of IoT devices while providing rich functions, which puts heavy pressure on communication. Besides, various new IoT devices such as smart home devices and wearable devices are also emerging and they have extremely stringent requirements on the bandwidth, delay, and privacy of the network~\cite{samuel2016review,pyattaev2015communication}, which pose challenges to the quality of the communication.
\begin{figure}[htbp]%
	\centering
	\includegraphics[width=0.7\textwidth]{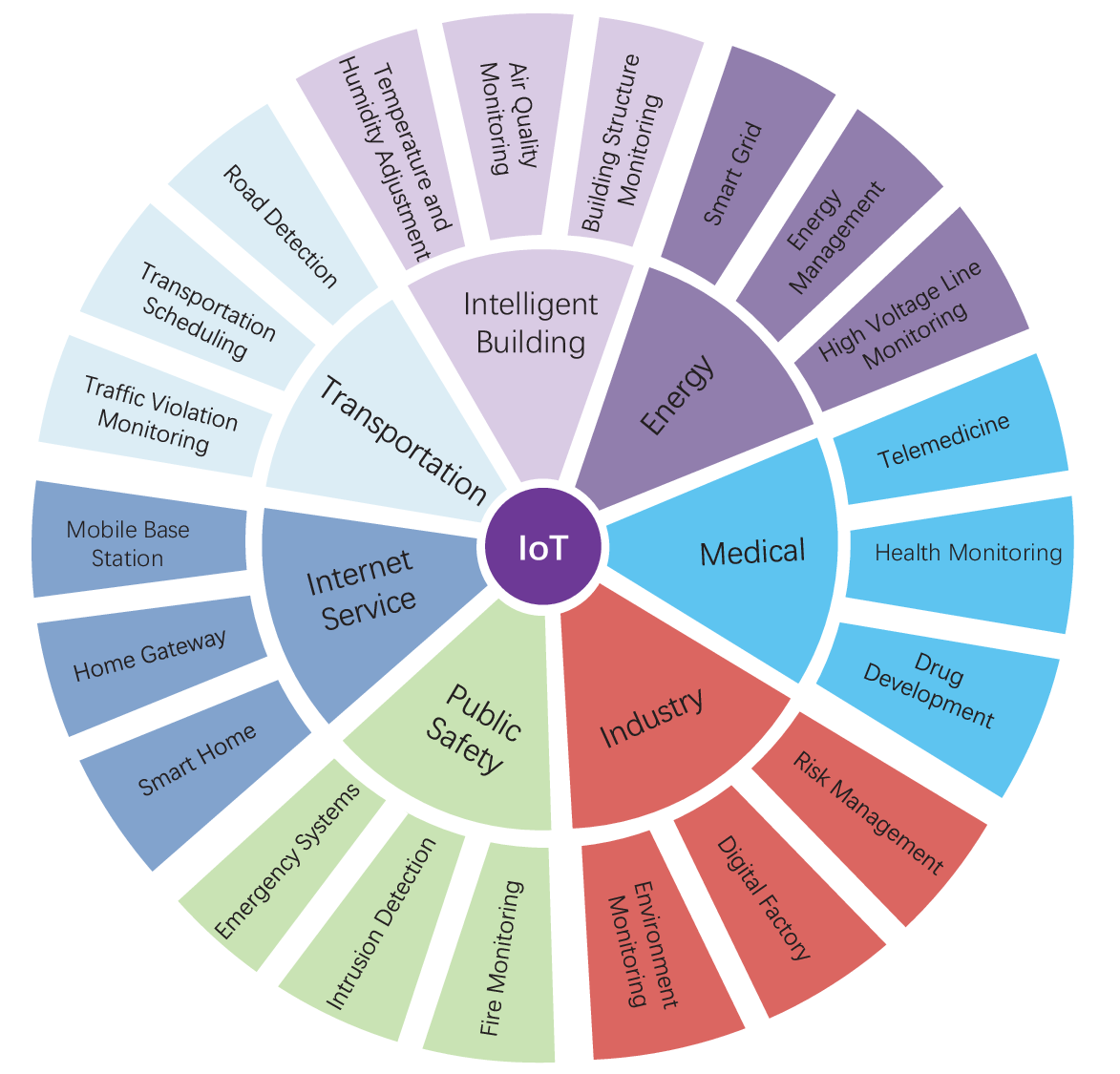}
	\caption{The applications of IoT.} 
	\label{fig:app_iot}
\end{figure}

Catering for the increasing service requirement, massive IoT devices will be deployed in a pervasive fashion to carry out tasks like monitoring, sensing data collection and preprocessing and immediate decision-making. The above tasks usually require a large number of computing resources, while the ability of IoT devices is relatively weak to support. However, edge computing can offload tasks and is expected to solve this problem~\cite{Shi2016, Mao2017A}. In detail, edge computing can offload computing tasks submitted by IoT devices to similar edge nodes to provide rich computing resources. Besides, the edge node in edge computing systems is adopted as a coordinator among them and responsible for their communications and even load-balance~\cite{wang2018secure}.

In solving the resource allocation problem in task offloading, in addition to using convex optimization~\cite{Chen2018d} and game theory~\cite{Chen2016f}, Deep Reinforcement Learning (DRL) is used in \cite{Chen2018k} to handle the comprehensive resource allocation in computation offloading. This approach can maximize the long-term benefits of energy consumption and execution latency and requires no prior knowledge of network statics and partial information. In detail, this kind of optimization has many advantages. First, the IoT device does not need to obtain global information, which is conducive to communication transmission and privacy protection. Secondly, it has the adaptability to the dynamic environment. Finally, this kind of optimization not only optimizes the system within a time segment but will consider long-term benefits. However, an assumption was made in \cite {Chen2018k}. They assume that IoT devices are computationally powerful enough to train their own DRL agents independently. However, IoT devices will not grow so powerful in the near future, and their computing resources can support lightweight neural networks at most.

At present, due to people pay more attention to data security and privacy, the protection of data and privacy has become an important issue that must be considered~\cite{yang2017survey}. For example, the General Data Protection Regulation~\cite{GDPR} implemented in the European Union is aimed at data security and privacy, and it gives users the right to delete or withdraw personal data. Therefore, the traditional way of transferring data to a data center for centralized analysis will face a privacy barrier in the future. In summary, how to protect data security and privacy while using large amounts of data will become an important challenge in the future.

Thus, we propose a distributed training scheme based on Federated Learning (FL)~\cite{mcmahan2016communication,bonawitz2019towards} to alleviate the training burden on each device. Unlike the traditional distributed training in the data center with an excellent networking environment, this training is restrained by wireless communication and networking and shall be performed in an efficient manner of communication. In this vein, observation data sensed by each IoT device are not required to be transmitted frequently between it and the edge node. Observation data on a specific IoT device are used for local training, and only updated parameters of the DRL agent are uploaded to the edge node for further model aggregation.

Therefore, in this work, we use FL to conduct the training process of DRL agents for jointly allocating communication and computation resources. Specifically, our main contributions lie in three aspects.

\begin{itemize}
	\item First, we studied a problem of computation offloading optimization in edge computing-supported IoT. On one hand, the computational tasks can be performed locally, while some energy needs to be allocated to the task processing components in the IoT device. On the other hand, it can also be performed by transmitting tasks to the edge nodes, while requiring some energy to be allocated to the data transmission components in the IoT device. Compared to local execution, this can be done with the richer computing resources of the edge nodes, but it also causes additional transmission delays due to data transmission. Moreover, we further analyzed the complexity of the problem and proved that the problem is NP-hard.
	\item Second, we designed an algorithm to make decisions about computation offloading and energy allocation, seeking to maximize the expected long-term utility. This algorithm can be trained based on federated learning, so data collected by each IoT device only needs to be stored locally for analysis. This approach avoids a lot of data transmission and achieves good data privacy protection.
	\item Finally, we simulate to evaluate the proposed algorithm and study the influence of various system parameters. Experimental results corroborate its effectiveness comparing to the centralized training method.
\end{itemize}

\section{Background}
\label{sec:technical foundation}
In this section, we introduce the background of this study. Since the technology that this research relies on mainly involves deep reinforcement learning, federated learning and edge computing, the principles and related research of these technologies are briefly introduced.

\subsection{Deep Reinforcement Learning}
\label{sec:deep reinforcement learning}
Since reinforcement learning (RL) techniques are usually applied to small data spaces, it is difficult to perform data processing through RL when data with high dimensions. However, Deep reinforcement learning (DRL) solves this problem by combining the high-dimensional input of deep learning with RL.

RL usually attempts to make action decisions $a$ according to the environment state $s$, and the action proceeds to the environment to obtain the action reward $r$, and continuously adjusts and improves according to $r$~\cite{sutton2018reinforcement}. While deep learning is a method of characterizing and learning data through a multi-layer neural network and learning the characteristic information of the data through the neural network. DRL combines deep learning and RL which not only retains the perception of deep learning but also can make decisions for RL, so it has better performance. 

DRL has been well applied in many fields such as natural language processing~\cite{shah2018follownet}, image recognition~\cite{rao2017attention}, and so on. Among them, AlphaGo~\cite{silver2016mastering}, which defeated the human professional chess player in go and showed excellent decision-making ability. Besides, some researchers use DRL to play Atari. They train the model by using the joystick moving direction as the action space of DRL and using the score in the game as a reward. In the game, DRL surpassed the traditional method in six of the games, even in the performance of three games beyond the human level. Besides, various DRL libraries such as TensorFlow~\cite{tflite}, Caffe~\cite{jia2014caffe} and Keras~\cite{Keras} are also emerging, which facilitates the application of DRL.

\begin{figure}[htbp]
	\centering
	\includegraphics[width=0.8\textwidth]{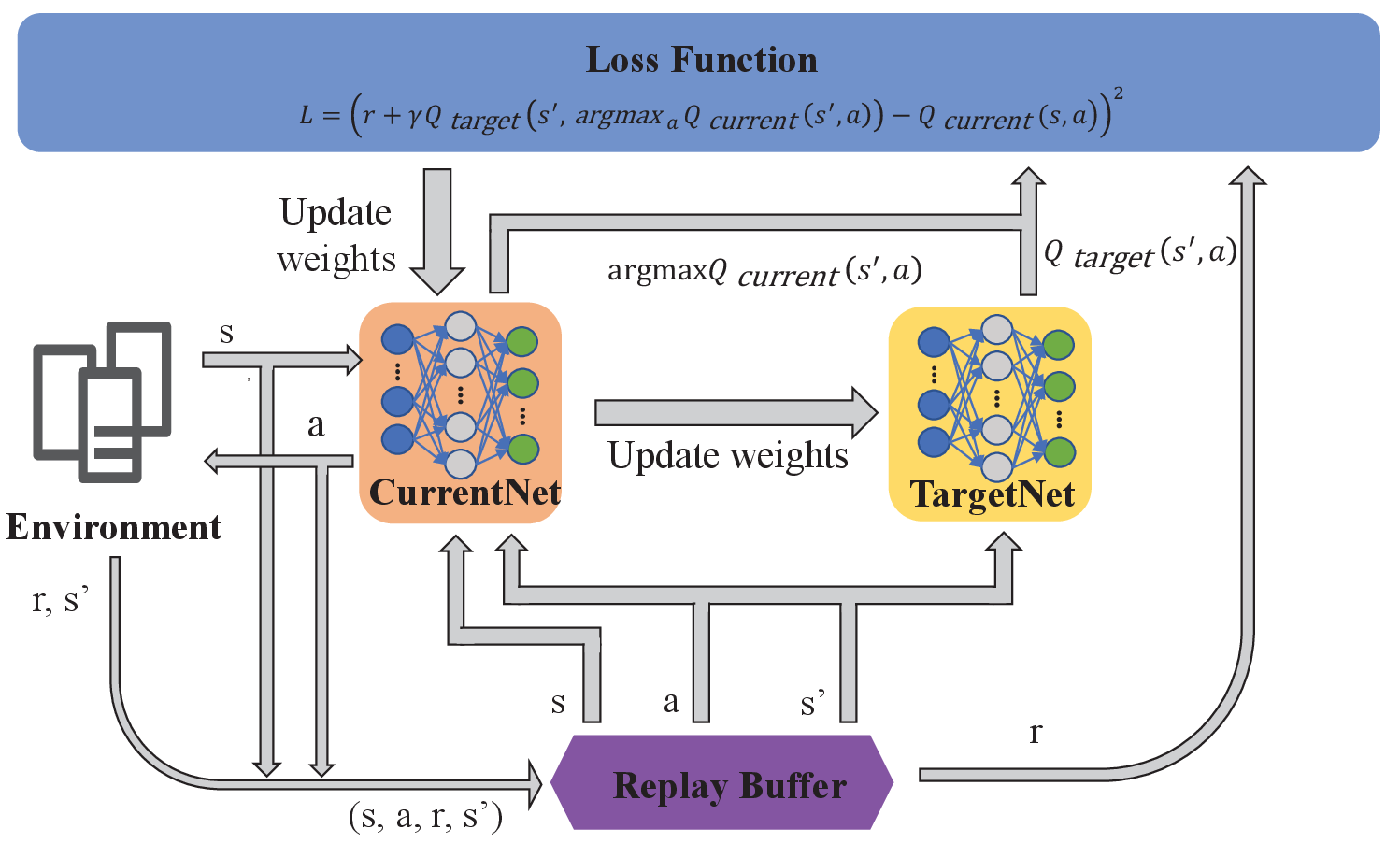}
	\caption{ Diagram of DDQN.}
	\label{fig:ddqn}
\end{figure}

As shown in Fig. \ref{fig:ddqn}, Double Deep Q-Learning (DDQN)~\cite{Hasselt20016} is an excellent DRL algorithm. To solve the problem of the curse of dimensionality in reinforcement learning, DDQN uses a neural network to approximate some states that have not appeared before:
\begin{equation}
	\label{Q approximate}
	Q(s, a) \approx f(s, a, w)~.
\end{equation}

The training process of the neural network is an optimization problem, so the loss function is defined as:
\begin{equation}
	L=(r+\gamma Q_{\text { target }}\left(s^{\prime}, \mathop{\arg\max}_{a} Q_{\text { current }}\left(s^{\prime}, a\right)\right)-Q_{\text { current }}(s, a))^{2}~.
	\label{loss_function}
\end{equation}

In addition, since the training of the neural network is supervised, the training data must satisfy the independent and identical distribution, otherwise, the network will be trapped in the local minimum. Therefore, a replay buffer $\mathcal{B}$ is constructed to store the data sample $D_t = (s_t, a_{t}, r_{t}, s_{t+1})$ at each time step $t$, and randomly extract a mini-batch of samples during training.

\subsection{Federated Learning}
\label{sec:federated learning}
The traditional large-scale neural network training needs to concentrate data in one device, which puts great challenges on traffic load and data privacy. In this regard, Google has proposed Federated learning~\cite{federatedlearning}. It allows multiple end devices to train on local data, and then only needs to upload updates to the cloud.
\begin{figure}[htbp]
	\centering
	\includegraphics[width=0.8\textwidth]{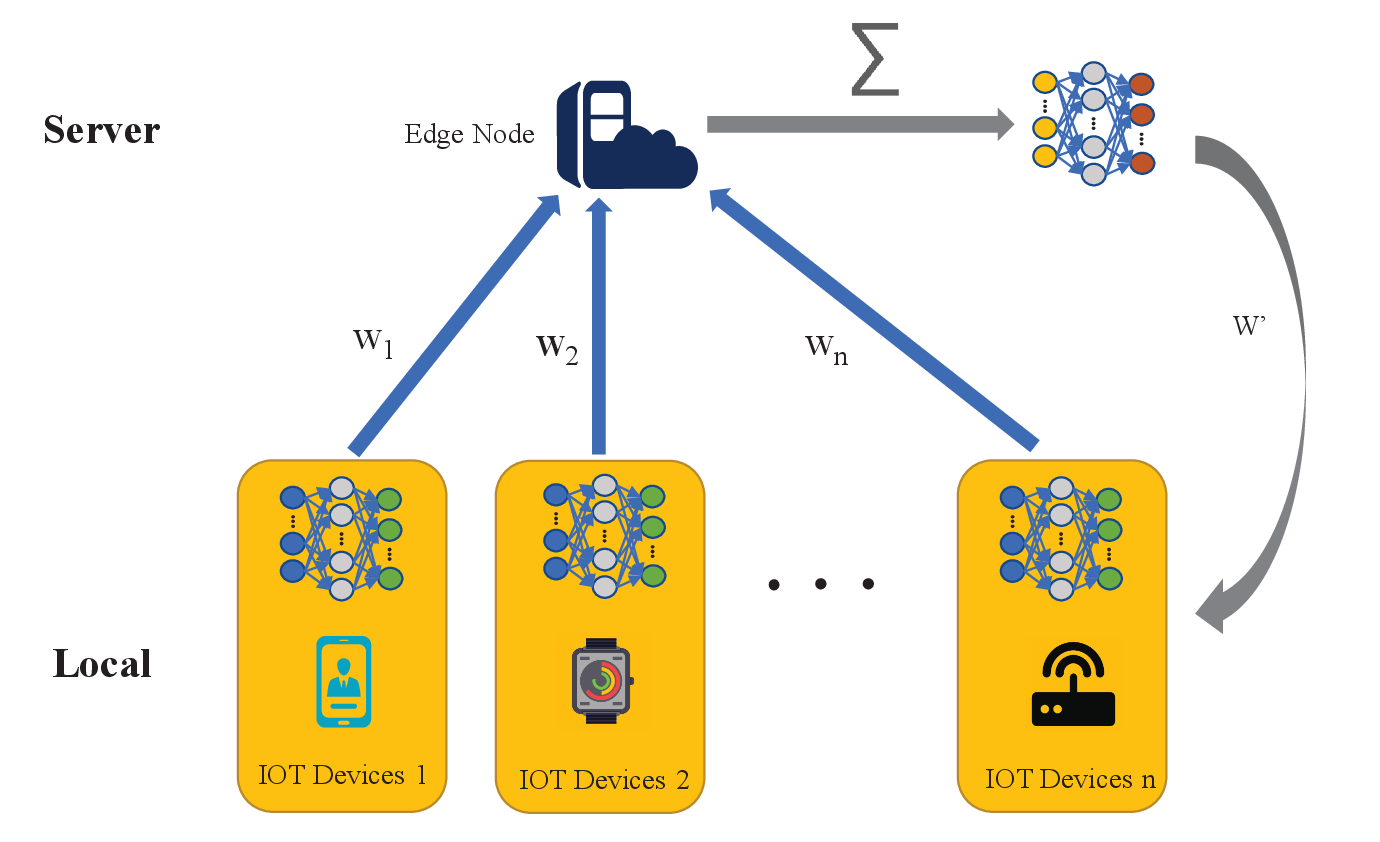}
	\caption{The training process of federated learning.}
	\label{fig:fl}
\end{figure}

The way of federated learning works is shown in Fig. \ref{fig:fl}. First, the end devices download the sharing model from the cloud, and then train the model according to the local data and transfer the update to the cloud with encrypted transmissions. Finally, the cloud integrates the sharing model according to the update from multiple end devices. Since user data is always stored locally on the end device throughout the process, large amounts of data can be avoided from being transmitted to the cloud, thereby reducing the pressure on data transmission and protecting data privacy.

In the practical application of federated learning, there are still some problems. On the one hand, sample data will be distributed in a large number of end devices in an extremely uneven manner. On the other hand, the transmission speed of end devices is slow, especially data uploading speed can limit overall performance. To solve these problems, Google developed an algorithm federated averaging to reduce the network requirements when training deep neural networks~\cite{mcmahan2016communication}, and also compress updates by using random rotation and quantization to reduce the amount of data transferred~\cite{konevcny2016federated}. In addition, a federate optimization algorithm was designed to optimize the high-dimensional sparse convex model~\cite{konevcny2016federated2}.

\subsection{Edge Computing}
\label{sec:edge computing}
Edge computing provides network, computing, application, and storage services to close users through distributed edge nodes. Therefore, tasks can be performed on EN to avoid sending data to the cloud. In 2014, the European Telecommunications Standards Institute standardized the concept of edge computing~\cite{ETSI2}, which also marked the standardization of edge computing technology. 

The end devices in edge computing are diverse, such as connected vehicle~\cite{grewe2017information}, smart cameras~\cite{ananthanarayanan2017real}, etc. and they are producers of data and tasks with data pre-processing and data transmission functions~\cite{abbas2017mobile}. However, when an end device needs to handle a task with a very large computational resource, it is often difficult to rely on the computing capability of the device itself to meet the demand. Therefore, it can be solved by edge computing using the computing resources of the edge node. The edge node is geographically close to the end device and can provide high-quality network connection and computing services. Compared with the end device, the edge node has a more powerful computing capacity to process the task, and the edge node responds faster to the end device than the cloud. Therefore, by using edge nodes to perform some computational tasks, the response speed of the task can be improved while ensuring accuracy. In addition, the edge node also has a caching function~\cite{Zeydan2016}, which can shorten the response time of re-access by caching objects with high access heat.

\section{System Model}
\label{sec:system_model}
Next, we will introduce the system model used in this study. First, the overall architecture of the edge computing-supported IoT system is presented and the relevant parameters involved in the system are introduced. After that, we introduce the changing ways of the related parameters and show the derivation process of the parameters.

\subsection{Overview}
\label{sec:system_architecture}

\begin{figure}[htbp]
	\centering
	\includegraphics[width=0.8\textwidth]{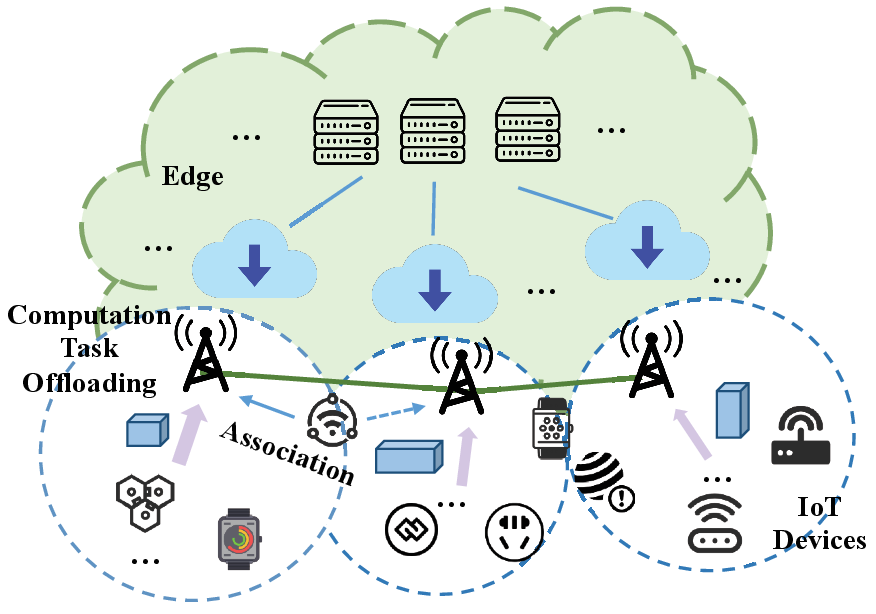}
	\caption{Edge computing-supported IoT system.}
	\label{fig:iot_edge_system}
\end{figure}

As shown in Fig. \ref{fig:iot_edge_system}, IoT devices, denoted as $\mathcal { D } = \{ 1 , \cdots , D \}$, are located in the service area of a set $\mathcal { N } = \{ 1 , \cdots , N \}$ of Edge Nodes (ENs). It is worth noting that the concept of time slices is used to divide the time into several epochs of $\delta$ (in seconds) indexed by $i$. The functions of transmission and calculation are supported by EN and the geographical location, task processing capability and data transmission capability of each EN are different. For IoT devices, there will be a task queue with a maximum length of $q ^ { \mathrm { t } } _ { \mathrm { max } }$ to temporarily store tasks and these tasks will be executed in the order of first-in-first-out (FIFO). The IoT device will generate tasks according to the Bernoulli distribution and define $a_{i}^{\mathrm { t }}$ as the task arrival indicator. $a_{i}^{\mathrm { t }}=1$ indicates that the epoch $i$ has a task generation, otherwise it indicates that no task is generated. When the task queue has reached the upper limit, the task queue will not store the newly generated task, which will cause the task to directly fail. In addition, the collection method of energy units is similar to that discussed in ~\cite{adu2018energy} and the IoT devices can collect energy units from the outside. In modeling, the IoT device has an energy queue with a maximum length of $q ^ { \mathrm { e } } _ { \mathrm { max } }$ to store energy units, which will acquire the energy unit according to the Poisson distribution.

The calculation task generated by IoT devices is modeled as $(\mu, \nu)$, where $\mu$(in bits) and $\nu$ respectively represent  the transmission data size required for offloading a task and the needed number of CPU cycles for processing the task. Moreover, there are two ways to accomplish these calculation tasks, one is performed locally on the IoT device, and the other is offloading to an EN with channel bandwidth $W$~Hz to perform. However, as shown in Fig. \ref{fig:edge_local}, the delay caused by data transmission can be avoided if the task is executed locally. For another, if the task is executed at EN, the task can be performed with the richer computing resources of EN. Both methods have their advantages and disadvantages and they need to be weighed according to the specific state. 

\begin{figure}[htbp]
	\centering
	\includegraphics[width=0.8\textwidth]{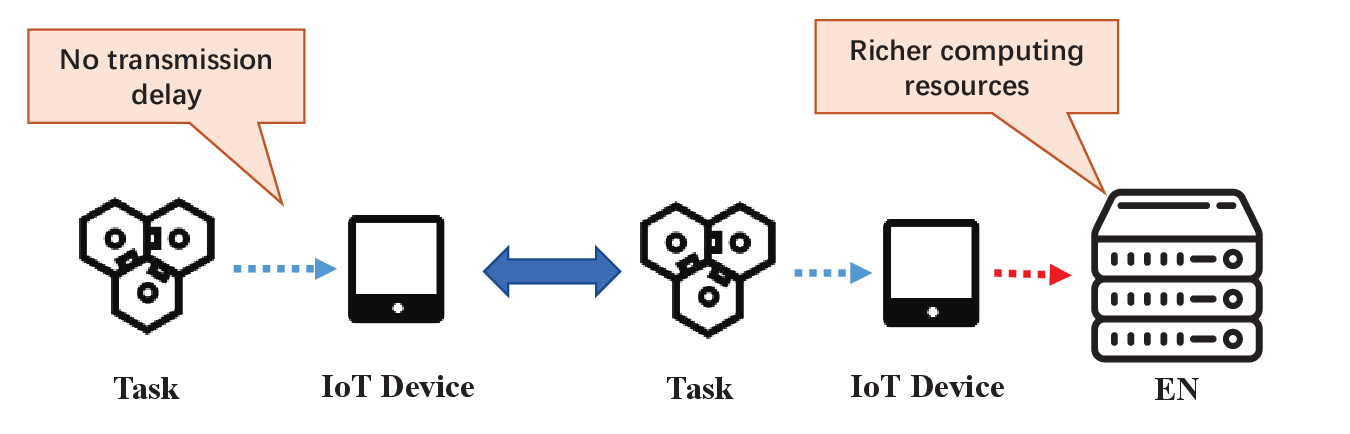}
	\caption{Task execution method}
	\label{fig:edge_local}
\end{figure}
\subsection{Description of System Model}
\label{sec:Description}
\subsubsection{System model architecture}
Since the execution mode of the task in the system model includes local execution and offloading to EN, the dynamic running process of the system is as shown in Fig. \ref{fig:dynamic} .
\begin{figure}[htbp]
	\centering
	\includegraphics[width=1.0\textwidth]{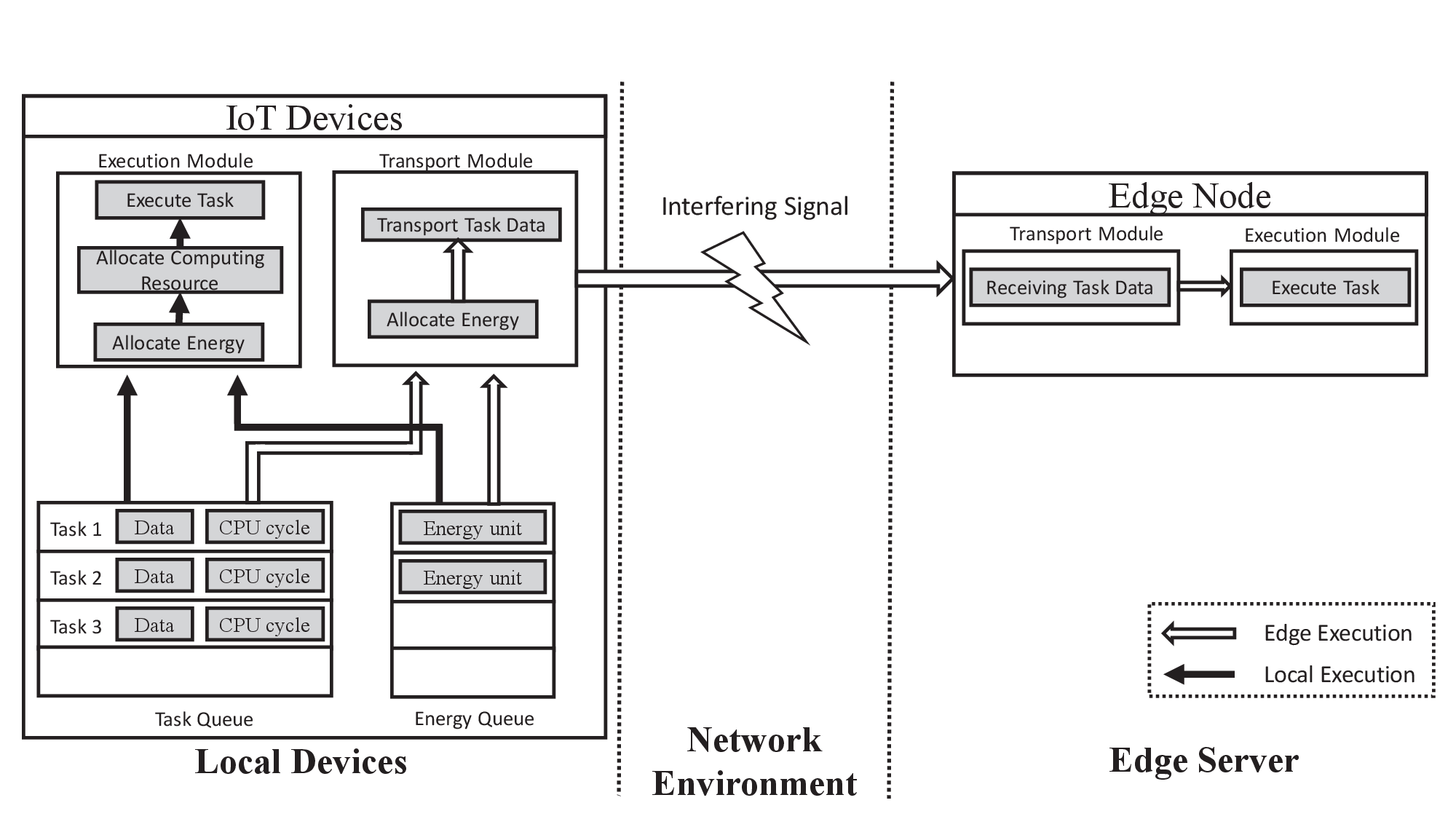}
	\caption{Schematic diagram of the dynamic system model.}
	\label{fig:dynamic}
\end{figure}

The IoT device needs to make a joint action $\left( c _ { i } , e _ { i } \right)$ during the epoch $i$, where $ c _ { i } $ represents the task offloading decision, and its specific definition is
\begin{equation}
	\label{offload_decision_description}
	c_{i} = \left\{ \begin{array} { l l } 
		{ 0 , } & { \text { if }  \text{local execution,} } \\ 
		{ n \in \mathcal{N} , } & { \text { if } \text{offload to EN n.}}  \end{array} \right.
\end{equation}

Besides, $ e _ { i } $ represents the number of energy units allocated and affects the CPU frequency and data transmission rate of IoT devices. Moreover, $e _ { i }$ cannot exceed the number of energy units in the energy queue, and if $e _ { i } = 0$, the task will not be executed and will still be saved in the task queue, and only $e _ { i } > 0$, the task will be executed. In addition, IoT devices also have a task queue that will not be able to save newly created tasks when the task queue is full. 

\subsubsection{Local execution}

When the task is executed locally, the time consumption satisfies the following constraints
\begin{equation}
	\label{time_consumption}
	d ^ { \mathrm { m } } _ { i } =  { e _ { i } } / {p^{exe}_{i} } ~,
\end{equation}
where $p^{exe}_{i}$ is the power of the IoT device to perform a task. Meanwhile, as in \cite{gerards2014interplay}, $p^{exe}_{i}$ can be written as
\begin{equation}
	\label{p_exe}
	p^{exe}_{i} =  \tau \cdot f _ { i }^{\zeta} ~,
\end{equation}
where $\tau$ is a constant that depends on the average switched capacitance and $\zeta$ is the average activity factor which usually close to 3. Besides, the time consumption also satisfies the following constraints
\begin{equation}
	\label{constraints_f}
	d ^ { \mathrm { m } } _ { i }  =  \nu / f _ { i } ~.
\end{equation}

From the above, we can get the time consumption $d ^ { \mathrm { m } }_ { i }$  for local execution by solving equation (\ref{time_consumption}), (\ref{p_exe}) and (\ref{constraints_f}), which can be written as
\begin{equation}
	\label{result_dm}
	d ^ { \mathrm { m } } _ { i }  =  (\dfrac{\nu^{\zeta} \cdot \tau}{e_i})^{\frac{1}{\zeta-1}} ~.
\end{equation}

\subsubsection{Executed at EN}
If the IoT device decides to offload the task to EN, then the relevant data of the task needs to be transferred to EN. Therefore, an association between the IoT device and the EN is required. If the required association is different from the previous, a handover will occur and cause additional delays.  Let $s_i$ denote the association between IoT device and EN at the beginning of epoch $i$, which can be written as
\begin{equation}
	\label{association}
	s _ { i } = \left\{ \begin{array} { l l } 
		{ s_{i-1} , } & { \text { if }  \text{executed locally in the previous epoch ($c _ { i-1 }$ = 0),}} \\ 
		{ c_{i-1} , } & { \text { if }  \text{offload to EN in the previous epoch ($c _ { i-1 } \ne $0).}}  \end{array} \right.~
\end{equation}

Moreover, the handover delay resulting from altering EN association should be considered. We assume that there will be $\sigma$ seconds delay when handoff occurs, so the handover delay $h _ {i}$ can be represented as
\begin{equation}
	\label{hi}
	h _ { i } = \left\{ \begin{array} { l l } 
		{ 0, } & { \text { if }  \text {no altering EN association ($c _ { i } = s _ { i }$),} } \\ 
		{ \sigma , } & { \text { if }  \text {altering EN association ($c _ { i } \ne s _ { i }$).} }  \end{array} \right.~
\end{equation}

In addition, it is necessary to model the rate of transfer between the IoT device and EN when data transfer occurs. We denote $g ^ { n } _ { u }$ as the channel gain of the IoT device $u$ between the IoT device and an EN $n \in \mathcal { N }$, which is assumed static and independently taken from a finite state space $\mathcal{G} _ n$. Let $L_{c}$ denote all sets of IoT devices that use the same channel as the target IoT device $u_{t}$. When a radio link is established for them, the achievable data rate can be calculated as
\begin{equation}
	\label{data_rate}
	r _ { i } = W \cdot \log _ { 2 } \left( 1 +  \dfrac{g ^ { n } _{u_{t}} \cdot p ^ { \mathrm { tr } } _{u_{t}}}{\sum_{u \in L_{c}} g^{n}_{u} \cdot p_{u}^{\mathrm{tr}}-g^{n}_{u_{t}} \cdot p_{u_{t}}^{\mathrm{tr}}} \right)~,
\end{equation}
where $p_{u}^{\mathrm{tr}}$ represent the transmit power of the IoT device $u$ and $p_{u}^{\mathrm{tr}}$ satisfies
\begin{equation}
	\label{trans_power}
	p_{u} ^ { \mathrm { tr } }  = {e _ { i }} / { d ^ { \mathrm { tr } } _ { i } } \leq p ^ { \mathrm { tr } } _ { \mathrm { max } }.
\end{equation}

Thus, the time consumption on data transmission can be expressed as
\begin{equation}
	\label{trans_time}
	d ^ { \mathrm { tr } } _ { i } = \mu /  { r _ { i } }~.
\end{equation}

According to the proof in \cite{Chen2018k}, given the association $s _ { i } \in \mathcal { N }$ and the allocated energy units $e _ {i} > 0$ at a epoch $i$, the transmitting rate should remain a constant for achieving the minimum transmission time, which is preferred in practical. Therefore, the minimum transmission time can be solved by equations of (\ref{data_rate}), (\ref{trans_power}) and (\ref{trans_time}) as
\begin{equation}
	\label{solve_trans_time}
	\log _ {2} ( 1 + \frac{ g_{i}^{c _ i} \cdot e_{i} }{  d^{\text{tr}}_i \cdot \sum_{u \in L_{c}} g^{n}_{u} \cdot p_{u}^{\mathrm{tr}}-g^{n}_{u_{t}} \cdot p_{u_{t}}^{\mathrm{tr}} } ) = \frac{\mu}{W \cdot d_{i}^{\text{tr}} }~.
\end{equation}

After the task is offloaded to an EN, the task will be completed by this edge node. While the execution delay $d ^ {s}$ of a task in EN is much less than the transmission delay $d ^ { \mathrm { tr } } _ { i }$, so $d ^ {s}$ is set to a small constant. In addition, the payment $\phi _ { i }$ of occupying the EN is set to avoid excessive use of EN resources in actual operation. With defining $\pi \in \mathbb { R } _ { + }$ as the price per unit of time, the payment expression can be written as
\begin{equation}
	\label{offload_payment}
	\phi _ { i } = \pi \cdot \left( \min \left\{ h _ { i } + d ^ { \mathrm { tr } } _ { i } + d ^ {s}, \delta \right\} - h _ { i } \right)~.
\end{equation}

\subsubsection{Update system model parameters}
Through the above modeling of different treatments of tasks, the task execution delay can be summarized as follows:
\begin{equation}
	\label{execution_delay}
	d _ { i } = \left\{ \begin{array} { l l } { d ^ { \text { m } } _ { i } , } & { \text { if } \text {local execution ($e _ { i } > 0 \text { and } c _ { i } = 0$),} } \\
		{ h _ { i } + d ^ { \text { tr } } _ { i } + d ^ { \text { s } }, } & { \text { if }  \text {offload to EN to execute ($e _ { i } > 0 \text { and } c _ { i } \in \mathcal { N }$), }} \\
		{ 0 , } & { \text { if } \text {not executed ($e _ { i }$ = 0).} } \end{array} \right.~
\end{equation}

In addition, it is also considered that not all tasks can be executed immediately after they are generated, so let $\rho _ { i }$ denote the queuing delay in the task queue at epoch $i$, which can be described as
\begin{equation}
	\label{queuing_delay}
	\rho _ { i } = q ^ { \mathrm { t } } _ { i } - 1 _ { \left\{ d _ { i } > 0 \right\} }~.
\end{equation}

In each epoch, we need to pay attention to the dynamic changes of the task queue and energy queue of the IoT device. In the details, the change in the length $q _ {i} ^ {\text{e}}$ of the energy queue can be described as
\begin{equation}
	\label{energy_queue_dynamic}
	q ^ { \mathrm { e } } _ { i + 1 } = \min \left\{ q ^ { \mathrm { e } } _ { i } - e _ { i } + a ^ { \mathrm { e } } _ { i } ,~ q ^ { \mathrm { e } } _ { \max } \right\}~,
\end{equation}
where $ a _ {i} ^ {\mathrm{e}} \in \mathbb { N } _ { + } $ is the number of energy units acquired by the IoT device in epoch $i$. For task queues, we need to consider the generation and completion of tasks for each epoch. Similarly, let $a ^ { \mathrm { t } } _ { i }$ denote the number of tasks generated in epoch $i$. Then dynamics of the task queue length can be calculated as
\begin{equation}
	\label{task_queue_dynamic}
	q ^ { \mathrm { t } } _ { i + 1 } = \min \left\{ q ^ { \mathrm { t } } _ { i } - \mathbf { 1 } _ { \left\{ 0 < d _ { i } \leq \delta \right\} } + a ^ { \mathrm { t } } _ { i }~~,~~q ^ { \mathrm { t } } _ { \mathrm { max } } \right\}~.
\end{equation}

However, newly generated tasks cannot be stored and fail directly when the task queue is full. So let $\eta _ { i }$ denote the number of computation task drop in an epoch $i$, which can be described as
\begin{equation}
	\label{task_drop_num}
	\eta _ { i } = \max \left\{ q ^ { \mathrm { t } } _ { i } - \mathbf { 1 } _ { \left\{ 0 < d _ { i } \leq \delta \right\} } + a ^ { \mathrm { t } } _ { i } - q ^ { \mathrm { t } } _ { \max } , ~0 \right\}~.
\end{equation}

In order to express this model more clearly, the values and definitions of the parameters are shown in table \ref{definitions}.

\begin{table*}[htbp]  
	\centering  
	\fontsize{6.5}{8}\selectfont  
	\begin{threeparttable}  
		\caption{The values and definitions of main parameters}
		\label{definitions}  
		\begin{tabular}{m{1cm}<{\centering}m{6cm}<{\centering}m{2.5cm}<{\centering}}  
			\toprule          
			\multicolumn{1}{c}{\bf Parameters }&\multicolumn{1}{c}{\bf Definitions}&\multicolumn{1}{c}{\bf Values}\cr
			\midrule 
			$\mathcal { D }$&The set of IoT devices.&/\cr
			$\mathcal { N }$&The set of Edge Nodes.&/\cr
			$W$&Channel bandwidth.&$6.0 \times 10^{5} \mathrm{Hz}$ \cr
			$\delta$&Duration of each epoch.&$5.0 \times 10^{-3} \mathrm{s}$ \cr
			$a_{i}^{\mathrm { t }}$&Task arrival indicator.&/\cr         
			$\mu$&The transmission data size required for offloading a task.&$3.0 \times 10^{4} \mathrm{bit}$\cr 
			$\nu$&The needed number of CPU cycles for processing a task.&$8.375\times 10^{6} \mathrm{cycle}$\cr 
			$c_i$&Task offload decision.&/\cr
			$e_i$&The number of energy units allocated.&/\cr
			$\tau$&A constant about the average switched capacitance. &$1.0 \times 10^{-28} $\cr
			$f_i$&Allocated CPU frequency of the IoT device.&/\cr
			$f_{\mathrm{max}}^{c}$&The maximum CPU frequency of the IoT device.&$2.0 \times 10^{9} \mathrm{Hz}$\cr
			$d_i^m$&Time consumption for the local task execution.&/\cr
			$r_i$&The achievable data rate of IoT device.&/\cr
			$p_{\mathrm{max}}^{\mathrm{tr}}$&The maximum transmit power.&$2 \mathrm{W}$\cr
			$p_{i}^{\mathrm{tr}}$&The maximum transmit power at epoch $i$.&/\cr
			$d_i^{\mathrm{tr}}$&The time for transmitting task data.&/\cr
			$q^e_{i+1}$&The energy queue length insider the IoT device at epoch $i$.&/\cr
			$d^s$&The delay of server-side execution.&$1.0 \times 10^{-6} \mathrm{s}$\cr
			$q_{\mathrm{max}}^{\mathrm{t}}$&The maximum length of the local task queue.&4\cr
			$q_{\mathrm{max}}^{\mathrm{e}}$&The maximum length of the local energy queue.&4\cr
			$d_i^{\mathrm{tr}}$&The time for transmitting task data.&/\cr
			$\sigma$&Delay of one handover.&$2.0 \times 10^{-3} \mathrm{s}$\cr
			$h_i$&The handover delay resulting from altering EN association.&/\cr
			$q^t_{i}$&Task queue length at epoch $i$.&/\cr
			$\eta_{i}$&The number of computation task drop at epoch $i$.&/\cr
			$\rho_{i}$&Queuing delay during the epoch $i$.&/\cr
			$\varphi_{i}$&The penalty of a computation task fails.&/\cr
			$\phi_{i}$&The payment of occupying the EN.&/\cr	
			\bottomrule  
		\end{tabular}  
	\end{threeparttable}  
\end{table*}

\section{Policy Training coordinated by Federated Learning}
In this section, we first explain the optimization problem to be solved and formulate the problem. In addition, we further analyze the complexity of the problem and prove that the optimization problem is NP-hard. After that, we analyze the merits of federated learning in edge computing and propose an algorithm of federated learning-based policy training. Finally, we carry out a theoretical analysis of the proposed algorithm.

\subsection{Problem Formulation}
Based on the system model described in Sec. \ref{sec:system_architecture} and Sec. \ref{sec:Description}, we will discuss the optimization problem next. First we define $\mathcal{X} _ { i }$ to represent the network environment of the IoT device during the epoch $i$.

\begin{equation}
	\label{equ:network_states}
	\begin{aligned}
		\mathcal{X} _ { i } &= ( q ^ { \mathrm { t } } _ { i } , q ^ { \mathrm { e } } _ { i } , s _ { i } , \bm{ g } _ { i } ) \in \bm{\mathcal{X}} \\
		& \stackrel { \mathrm { def } } { = } \left\{ 0,1 , \cdots , q ^ { \mathrm{t} } _ { \max } \right\} \times \left\{ 0,1 , \cdots , q ^ { \mathrm { e } } _ { \max } \right\} \times \mathcal { N } \times \left\{ \times _ { n \in \mathcal { N } } \mathcal { G } _ { n } \right\},
	\end{aligned}~
\end{equation}
where $\bm{ g } _ { i } = \left( g ^ { n } _ { i } : n \in \mathcal { N } \right)$. The IoT device will make the offloading decision and determine the number of energy units allocated during the initial period of epoch $i$, i.e.,

\begin{equation}
	\label{equ:joint_action}
	\left( c _ { i } , e _ { i } \right) \in \mathcal { J } \stackrel { \mathrm { def } } { = } \{ \{ 0 \} \cup \mathcal { N } \} \times \left\{ 0,1 , \cdots , q ^ { \mathrm { e } } _ { \max } \right\}~.
\end{equation}

The policy for making the above actions is defined as $\bm { \Phi }$ and the expected long-term utility is defined as
\begin{equation}
	\label{long_term_utility}
	U ( \mathcal{X} , \boldsymbol{\Phi} ) = \mathbb{E} _ { \boldsymbol{\Phi} } \left[ \lim _ { I \rightarrow \infty } \frac { 1 } { I } \cdot \sum _ { i = 1 } ^ { I } u \left( \mathcal{X} _ { i } , \Phi \left( \mathcal{X} _ { i } \right) \right) | \mathcal{X} _ { 1 } = \mathcal{X} \right]~,
\end{equation}
where $\mathcal{X} _ { 1 }$ represents the initial network environment and  $u(\cdot)$ represents the short-term utility in the epoch $i$, which is determined by the task execution delay $d _ {i}$ , the number of task drop $\eta _ {i}$, the task queuing delay $\rho _ {i}$ and the payment $\phi _ { i }$. Besides, it is worth mentioning that this optimization policy can be personalized according to the target. For instance, if low delay is the most important indicator in a system, the weighting of the task execution delay $d _ {i}$ and task queuing delay $\rho _ {i}$ can be adjusted to change the proportion of delay in the whole utility.

\subsection{Complexity Analysis}
\label{sec:complexity analysis}
Here, we can first consider a special case of this problem. Suppose that at epoch $t$, there are $N^{\prime}(q_{\mathrm{max}}^{\mathrm{t}} \geq N^{\prime}>0)$ tasks in the task queue and $M^{\prime}(q_{\mathrm{max}}^{\mathrm{e}} \geq M^{\prime}>0)$ energy units in the energy queue. Besides, no energy unit and task are generated after epoch $t$. In addition, each task is executed on the local device and executed at the maximum CPU frequency $f_{\mathrm{max}}^{c}$. Therefore, the energy unit $e_k^{\prime}$ required to perform task $k$ and the utility $ u_k^{\prime}$ obtained by completing task $k$ will be two certain values. Furthermore, $d_k^{\prime} \in \{0,1\}$ is defined as the task execution indicator, that is, $ d_k^{\prime}=0$ means that the task $k$ is not executed, and $d_k^{\prime}=1$ means that the task $k$ is executed. In this case, the problem is changed to:
\begin{equation}
	\label{complexity1}
	\begin{aligned}
		\max \sum_{k \in \ N^{\prime}} d_k^{\prime} \cdot u_k^{\prime}
	\end{aligned}~,
\end{equation}
subject to
\begin{equation}
	\label{complexity2}
	\begin{aligned}
		\sum_{k \in \ N^{\prime}} d_k^{\prime} \cdot e_k^{\prime} \leq M^{\prime}.
	\end{aligned}~
\end{equation}

For this situation, we can regard the energy units $ M^{\prime}$ as the capacity of a knapsack, $N^{\prime}$ tasks as the items, and the energy unit $e_k^{\prime}$ required for each task and the utility $ u_k^{\prime}$ obtained by completing a task are regarded as the weight and value of items respectively. Then the special case can be regarded as 0/1 knapsack problem. Since \cite{johnson1979computers} has proven that the 0/1 knapsack problem is NP-hard. Therefore, the special case is also NP-hard. According to \cite{garey1978strong}, since the problem in the special case is NP-hard, the problem is also NP-hard in non-special case.

\subsection{Reasons for Using Federated Learning in Edge Computing}

In the above, we have introduced the computation offloading problem. Fortunately, DRL can deal with this kind of problem well, thus we use Double Deep Q Learning (DDQN) \cite{Mnih2015, Hasselt20016} to maximize the long-term utility. In addition, we also use the FL in our policy, and then we will explain why FL is used.

Although DRL can make decisions efficiently, it consumes a lot of computing resources. Therefore, how to provide the computing resources needs to be considered. On one hand, if the DRL agent is trained on the EN, it will bring about three disadvantages: 1) it will cause a large amount of data to be transmitted between the IoT device and EN, thereby increasing the transmission pressure of the wireless channel; 2) the transmitted data may contain private information, which is not conducive to the privacy protection of the data; 3) although the privacy information in the data can be removed in some ways, this will destroy the integrity of the data and affect the training effect.

On the other hand, if the DRL agent is trained on the IoT device individually, two deficiencies remain: 1) this will take too long to train each DRL agent from the beginning; 2) if each DRL agent will be trained independently, it will cause more energy consumption.

\begin{figure}[htpb]
	\centering
	\includegraphics[width=0.8\textwidth]{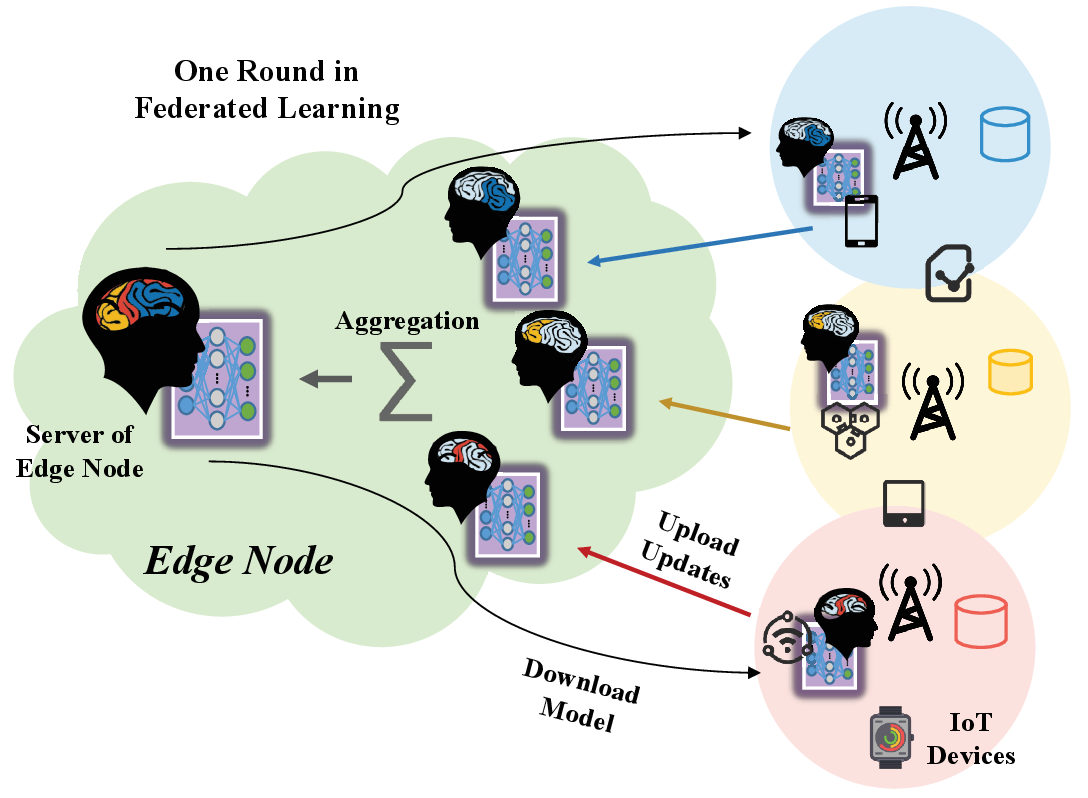}
	\caption{Federated Learning-based DRL Training.}
	\label{fig:fl_drl_training}
\end{figure}

Therefore, the DRL will be trained in a distributed manner as shown in Fig. \ref{fig:fl_drl_training}. However, due to network limitations and the challenge of protecting data privacy, most distributed deep learning technologies \cite{mcmahan2016communication} are not feasible. For the above reasons, FL is introduced into the proposed policy for distributed training DRL agents.

\subsection{Federated Learning-based DRL Training about Computation Offloading}

Because each IoT device needs to make decisions based on its own network environment in computation offloading and it will also have an impact on the network environment. Therefore, we need EN to coordinate various IoT devices to optimize the overall network environment.

As shown in Algorithm 1, some IoT devices will be randomly selected during each iteration to perform the following operations: \textit{1)} download DRL agent parameters from EN and load them; \textit{2)} training DRL agent with the data obtained by itself; \textit{3)} upload update parameters to EN for model aggregation. In the envisioned IoT system, the IoT device needs to train the DRL agent based on the local data acquired by itself, including the number of unfinished tasks, the remaining energy units, and their own network connection status, etc. Therefore, the IoT device does not need to upload these local data, just upload the update parameters to the EN for aggregation, and then download the aggregated model parameters from EN. In addition, some IoT devices with insufficient training data can share the training of the DRL agent.

\begin{algorithm}[!http]
	\caption{Federated Learning-based Policy Training}\label{algo1}
	
	\textbf{Initialization:}\\
	 Init the set of ENs $\mathcal { N }$, the set of IoT devics $\mathcal { D }$;\\
	 Init \{$\theta_n$, $\theta_d$, $C_{d}$|$n \in \mathcal { N }$, $d \in \mathcal { D }$\}

	\textbf{Iteration:}\\
	\For{$t = 1,2,...,T$}{
		Randomly select m IoT devices $\mathcal { D }^t \subseteq \mathcal { D }$;\\
		\For{each $d \in \mathcal { D }^t$}  
		{  
			Gets weights $\theta_n^{t-1}$ of the associated EN;\\  
			Update local weights $\theta_d^{t-1} \Leftarrow \theta_n^{t-1}$;\\
			Get local data $\mathcal{D}_{d}^t$;\\
			Get weights and training times ($\theta_d^{t}, C_{d}^{t}) \Leftarrow$ Train($\theta_d^{t-1}$, $\mathcal{D}_{d}^t$);\\
			Upload $\theta_d^{t}$ and $C_{d}^{t}$ to EN;\\
		}  
	    \For{each $e \in \mathcal { N }$}  
	    {  
	    	Receive updates from $\mathcal { D }_n^t \subseteq \mathcal { D }$;\\  
	    	$ \theta_n ^ { t  } \Leftarrow \sum\nolimits_{d \in \mathcal { D }^t_n} { (C_d^t / \sum\nolimits_{d \in \mathcal { D }^t_n} {C_d^t}) \cdot \theta ^ {t } _ {d} }$ ;\\
	    }  
	 }
\end{algorithm}

\subsection{Theoretical Analysis of Federated Learning-based Policy}

Recalling the federal learning-based policy training above, for each agent running on the IoT device, the network state $X_{i}=\left(q_{i}^{\mathrm{t}}, q_{i}^{\mathrm{e}}, s_{i}, \boldsymbol{g}_{i}\right) \in \mathcal{X}$ will be obtained first, then the action pair $\left(c_{i}, e_{i}\right)$will be selected according to the policy $\mathbf{\Phi}$, after that, a new network state $X_{i+1}$ and a reward $r_i$ will be generated. The process can be described as a Markov decision process.

The more important it is to be closer to the current reward when calculating reward feedback. Therefore, the method of discounted future reward is adopted
\begin{equation}
	\begin{aligned}
		R_{t}=r_{t}+\gamma r_{t+1}+\gamma^{2} r_{t+2}+\ldots+\gamma^{n-t} r_{n}=r_{t}+\gamma R_{t+1}~,
	\end{aligned}
	\label{reward}
\end{equation}
where $\gamma $ = 0.9 is the discount factor. After that, the basic form of the bellman equation can be obtained by combining equation(\ref{reward})
\begin{equation}
	v(\mathcal{X}_{t})=\mathbf{E}\left[r_{t}+\gamma v\left(\mathcal{X}_{t+1}\right) | \mathcal{X}_{i}=\mathcal{X}_{t}\right]~,
	\label{bellman}
\end{equation}

Similar, Q-function is introduced to describe the value of different actions in a certain state. Therefore, $Q(\mathcal{X}^{j}, (c^{j}, e^{j}))$ defined to represent the value of the action pair $(c^{j},e^{j})$ when it is in state $\mathcal{X}^{j}$.
\begin{equation}
	Q(\mathcal{X}^{j+1}, (c^{j+1}, e^{j+1}))=\mathbf{E}\left[r+\lambda Q\left(\mathcal{X}^{j}, (c^{j}, e^{j})\right) \right]~
	\label{q_fun}
\end{equation}

However, because the greedy method is used to calculate the Q-function, the calculation of the Q-function may be too close to an earlier calculated local optimal Q-function, resulting in a large deviation, also called the overestimation. To solve overestimation, DDQN decouples the selection of actions and calculates the Q-function~\cite{Hasselt20016}. First, it will find the action pair $(c^{\mathrm{max}}, e^{\mathrm{max}})$ corresponding to the maximum Q-function through Current Network, and then calculate Q-function through Target Network. In addition, for the theoretical proof, the deviation caused by the use of neural network approximation is neglected here, namely

\begin{equation}
	\begin{aligned}
		Q(\mathcal{X}^{\prime},(c^{\prime}, e^{\prime}))=Q(\mathcal{X}^{j},(c^{j}, e^{j}))+\alpha\left( u \left( \mathcal{X} _ { j } , \Phi \left( \mathcal{X} _ { j } \right) \right)+\gamma \cdot Q\left(\mathcal{X}^{\prime},\left((c^{\mathrm{max}}, e^{\mathrm{max}}\right)\right)-Q(\mathcal{X}^{j},(c^{j}, e^{j}))\right)~,
	\end{aligned}
	\label{ddqn_q_fun}
\end{equation}
where $\alpha$ = 0.005 is the learning rate and $\mathcal{X}^{\prime}$, $c^{\prime}$ and $e^{\prime}$ represent $\mathcal{X}^{j}$, $c^{j}$ and $e^{j}$ of the next cycle, respectively. Based on the above reasoning, the convergence of the algorithm can be further analyzed, namely:
\begin{theorem}
	\label{theorem1}
	If the following conditions are met, each agent running on the IoT device in the Algorithm 1 will converge to the optimal $Q\left(\mathcal{X}^{*},(c^{*}, e^{*})\right)$ with probability one (w.p.1) when it is updated according to equation (\ref{ddqn_q_fun}).
	\begin{itemize}
		\item [1)] 
		The state and action spaces are finite;       
		\item [2)]
		$\sum_{t=0}^{+\infty} \alpha=\infty$, $\sum_{t=0}^{+\infty}\left(\alpha\right)^{2}<\infty$;
		\item [3)]
		\textbf{Var}\{$u \left( \mathcal{X} _ { j } , \Phi \left( \mathcal{X} _ { j } \right) \right)$\} is bounded.
	\end{itemize}
\end{theorem}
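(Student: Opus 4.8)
The plan is to recognize update (\ref{ddqn_q_fun}) as a stochastic-approximation recursion driven by a contraction, and then to run the classical Q-learning convergence argument (in the spirit of Jaakkola--Jordan--Singh and Tsitsiklis). Since the theoretical analysis deliberately neglects the function-approximation error, the Double-DQN target collapses to $\max_{(c',e')} Q(\mathcal{X}',(c',e'))$, so I would first rewrite the per-agent iteration in the canonical form
\begin{equation}
	Q_{t+1}(\mathcal{X},(c,e)) = \bigl(1-\alpha_t(\mathcal{X},(c,e))\bigr)\,Q_t(\mathcal{X},(c,e)) + \alpha_t(\mathcal{X},(c,e))\bigl[(\mathcal{H}Q_t)(\mathcal{X},(c,e)) + w_t(\mathcal{X},(c,e))\bigr],
	\label{eq:canonical}
\end{equation}
where $(\mathcal{H}Q)(\mathcal{X},(c,e)) = \mathbb{E}\bigl[u(\mathcal{X},\Phi(\mathcal{X})) + \gamma \max_{(c',e')} Q(\mathcal{X}',(c',e'))\bigr]$ is the Bellman optimality operator, $\alpha_t(\mathcal{X},(c,e))$ equals the learning rate $\alpha$ on the state--action pair sampled at step $t$ and $0$ otherwise, and $w_t$ is the sampling noise $u(\cdot)+\gamma\max Q_t(\mathcal{X}',\cdot) - (\mathcal{H}Q_t)(\cdot)$, which is mean-zero given the history by construction.

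Next I would invoke the standard stochastic-approximation lemma: if $\Delta_{t+1} = (1-\alpha_t)\Delta_t + \alpha_t F_t$ with $\sum_t \alpha_t = \infty$, $\sum_t \alpha_t^2 < \infty$, $\|\mathbb{E}[F_t\mid \mathcal{F}_t]\|_\infty \le \gamma\|\Delta_t\|_\infty$ for some $\gamma<1$, and $\mathbf{Var}(F_t\mid\mathcal{F}_t) \le C(1+\|\Delta_t\|_\infty^2)$, then $\Delta_t\to 0$ w.p.1. Setting $\Delta_t = Q_t - Q^*$ with $Q^*$ the fixed point of $\mathcal{H}$, I would discharge the three hypotheses one by one, and this is precisely where the three conditions of the theorem enter: condition (1) (finite state and action spaces) guarantees that $\mathcal{H}$ is a well-defined $\gamma$-contraction in the sup-norm, hence that a unique $Q^*$ exists, and it also makes $u(\cdot)$ and every iterate $Q_t$ bounded; condition (2) is literally the Robbins--Monro step-size requirement needed by the lemma, with the per-pair step sizes in (\ref{eq:canonical}) inheriting it under the usual assumption that every pair is visited infinitely often; and condition (3) together with the boundedness from (1) yields the required bounded-variance estimate on $w_t$.

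To finish I would address the federated part of Algorithm~\ref{algo1}. The aggregation step $\theta_n^t \Leftarrow \sum_{d}(C_d^t/\sum_d C_d^t)\,\theta_d^t$ is a convex combination of the local parameter vectors; once every local sequence $\{Q^{(d)}_t\}$ has been shown to converge w.p.1 to the same $Q^*$, the triangle inequality gives $\|\sum_d \lambda_d^t Q^{(d)}_t - Q^*\|_\infty \le \sum_d \lambda_d^t \|Q^{(d)}_t - Q^*\|_\infty \to 0$, so the shared model converges as well. Partial participation (only $m$ devices per round) and the resulting asynchrony are absorbed into the active/vanishing step-size schedule $\alpha_t(\mathcal{X},(c,e))$ already used in (\ref{eq:canonical}).

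The main obstacle I anticipate is not the stochastic-approximation machinery, which is routine, but justifying that each agent actually faces a stationary MDP with a \emph{fixed} contraction $\mathcal{H}$: in this system the achievable rate (\ref{data_rate}) depends on the set $L_c$ of co-channel devices and the EN load depends on the other agents, all of which evolve as those agents learn, so from one agent's viewpoint the transition/reward kernel is a priori non-stationary. The proof therefore needs either an explicit assumption that the other agents' policies are eventually frozen, or a two-time-scale (quasi-static) argument; I would also have to state the ergodicity / infinitely-often-visitation assumption explicitly so that condition (2) can be applied per state--action pair, and reconcile the target-network lag in the implemented algorithm with the idealized synchronous $\max$ operator used in (\ref{eq:canonical}).
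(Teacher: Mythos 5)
Your proposal follows essentially the same route as the paper: the paper also rewrites the update (\ref{ddqn_q_fun}) in the canonical stochastic-approximation form $\Delta^{j+1}=(1-\alpha^{j})\Delta^{j}+\alpha^{j}\Psi^{j}$ with $\Delta=Q-Q^{*}$, invokes the Jaakkola--Jordan--Singh lemma, verifies the conditional-mean condition by showing the Bellman operator is a sup-norm contraction, and bounds the variance of $\Psi^{j}$ using condition (3). Your additional remarks --- handling the federated aggregation as a convex combination and flagging the non-stationarity of each agent's effective MDP (co-channel interference, EN load) --- go beyond what the paper proves, which silently treats each agent's environment as a fixed MDP and never returns to the aggregation step.
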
    
\begin{proof}
	In the proof, a result of stochastic approximation given in \cite{jaakkola1994convergence} is first used.
	
	\begin{lemma}
		\label{lemma1}
		A random iterative process $\triangle^{j+1}(x)$, which is defined as
		\begin{equation}
			\Delta^{j+1}(x)=\left(1-\alpha^{j}(x)\right) \triangle^{j}(x)+\beta^{j}(x) \Psi^{j}(x)~,
			\label{lemequ}
		\end{equation}
		
		converges to zero with probability one (w.p.1) if and only if the following conditions are satisfied.
		\begin{itemize}
			\item [1)] 
			The state space is finite;       
			\item [2)]
			$\sum_{j=0}^{+\infty} \alpha^{j}=\infty, \sum_{j=0}^{+\infty}\left(\alpha^{j}\right)^{2}<\infty, \sum_{j=0}^{+\infty} \beta^{j}=\infty, \sum_{j=0}^{+\infty}\left(\beta^{j}\right)^{2}<\infty$, and 
			
			$E\left\{\beta^{j}(x) | \Lambda^{j}\right\} \leq E\left\{\alpha^{j}(x) | \Lambda^{j}\right\}$ uniformly w.p. 1;
			\item [3)]
			$\left\|\mathbf{E}\left\{\Psi^{j}(x) | \Lambda^{j}\right\}\right\|_{W} \leq \varrho\left\|\Delta^{j}\right\|_{W},$ where $\varrho \in(0,1)$
			\item [4)]
			$\mathbf{Var}\left\{\Psi^{j}(x) | \Lambda^{j}\right\} \leq C\left(1+\left\|\Delta^{j}\right\|_{W}\right)^{2},$ where $C>0$ is a constant.
		\end{itemize}
		
		Note that $\Lambda^{j}=\left\{\triangle^{j}, \Delta^{j-1}, \cdots, \Psi^{j-1}, \cdots, \alpha^{j-1}, \cdots, \beta^{j-1}\right\}$denotes the past at time slot $j$. $\|\cdot\|+{W}$ denotes some weighted maximum norm.
	\end{lemma}
	
	Subsequent proofs can be made around lemma \ref{lemma1}, including the transformation of the form into the equation (\ref{lemequ}) and the four conditions in the lemma \ref{lemma1}.
	
	\textbf{a.} \textbf{Transformation equation form.}
	First of all, we can rewrite equation (\ref{ddqn_q_fun}) to the following
	\begin{equation}
		\begin{aligned}
			Q(\mathcal{X}^{\prime},(c^{\prime}, e^{\prime}))=(1-\alpha) \cdot Q(\mathcal{X}^{j},(c^{j}, e^{j}))+\alpha \cdot \left( u \left( \mathcal{X} _ { j } , \Phi \left( \mathcal{X} _ { j } \right) \right)+\gamma \cdot Q\left(\mathcal{X}^{\prime},\left(c^{\mathrm{max}}, e^{\mathrm{max}}\right)\right)\right)~.
		\end{aligned}
		\label{ddqn_q_fun2}
	\end{equation}
	
	By subtracting the optimal $Q\left(\mathcal{X}^{*},(c^{*}, e^{*})\right)$  from both sides of equation (\ref{ddqn_q_fun2}), it can rewrited to the following form
	\begin{equation}
		\triangle^{j}\left(\mathcal{X},(c, e)\right)=\left(1-\alpha^{j}\right) \triangle^{j}\left(\mathcal{X},(c, e)\right)+\alpha^{j} \Psi^{j}\left(\mathcal{X},(c, e)\right)~,
		\label{b3}
	\end{equation}
	where
	\begin{equation}
		\triangle^{j}\left(\mathcal{X},(c, e)\right)=Q(\mathcal{X}^{\prime},(c^{\prime}, e^{\prime}))-Q\left(\mathcal{X}^{*},(c^{*}, e^{*})\right)~,
		\label{b4}
	\end{equation}	
	\begin{equation}
		\begin{aligned}
			\Psi^{j}\left(\mathcal{X},(c, e)\right)=u \left( \mathcal{X} _ { j } , \Phi \left( \mathcal{X} _ { j } \right) \right)+\gamma \cdot Q\left(\mathcal{X}^{\prime},\left(c^{\mathrm{max}}, e^{\mathrm{max}}\right) -Q\left(\mathcal{X}^{*},(c^{*}, e^{*})\right)\right)~.
		\end{aligned}
		\label{b5}
	\end{equation}
	
	Therefore, equation (\ref{b3}) can be seen as equation (\ref{lemequ}) in lemma \ref{lemma1} with $\alpha^{j}(x) = \beta^{j}(x)$ and then the theorem \ref{theorem1} will be proved by proving that all four conditions of lemma \ref{lemma1} have been met.  
	
	\textbf{b.} \textbf{Condition 1) in lemma \ref{lemma1}.}
	For condition 1) in lemma \ref{lemma1}, it can be proved by condition 1) in theorem \ref{theorem1}.
	
	\textbf{c.} \textbf{Condition 2) in lemma \ref{lemma1}.}
	The condition 2) in lemma \ref{lemma1} also can be easily proved. Since $\alpha$ and $\beta$ in the equation (\ref{lemequ}) correspond to the learning rate $\alpha \in (0,1)$ in the equation (\ref{b3}), the condition is satisfied.
	
	\textbf{d.} \textbf{Condition 3) in lemma \ref{lemma1}.}
	To prove condition 3) in lemma \ref{lemma1}, the concept of contraction mapping is introduced and then prove that $\Psi^{j}\left(\mathcal{X},(c, e)\right)$ is a contraction mapping.
	\begin{definition}
		For a map $\textbf{H}:\chi \rightarrow \chi$ and any $x_{1},x_{2} \in \chi$, if there exists a constant $\delta \in (0,1)$ satisfies the following equation, then the map $\textbf{H}$ is a contraction mapping.
		\begin{equation}
			\left\|\mathbf{H} x_{1}-\mathbf{H} x_{2}\right\| \leq \delta\left\|x_{1}-x_{2}\right\|~
			\label{b6}
		\end{equation}		
	\end{definition}
	
	Combined with equation (\ref{q_fun}), the optimal $Q\left(\mathcal{X}^{*},(c^{*}, e^{*})\right)$ can be expressed as
	\begin{equation}
		Q\left(\mathcal{X}^{*},(c^{*}, e^{*})\right)=\mathbf{E}\left[u \left( \mathcal{X} _ { * } , \Phi \left( \mathcal{X} _ { * } \right) \right)+\lambda Q\left(\mathcal{X}^{\prime}, (c^{\mathrm{max}}, e^{\mathrm{max}})\right) \right]~.
		\label{b8}
	\end{equation}
	
	Then further define the mapping \textbf{H} as
	\begin{equation}
		\textbf{H}_{q\left(\mathcal{X}^{\prime},(c^{\prime}, e^{\prime})\right)}=\mathbf{E}\left[u \left( \mathcal{X} _ { j } , \Phi \left( \mathcal{X} _ { j } \right) \right)+\lambda q\left(\mathcal{X}^{\prime}, (c^{\mathrm{max}}, e^{\mathrm{max}})\right) \right]~.
		\label{b9}
	\end{equation}
	
	After that, combined with the properties of absolute value inequalities and the definition of infinity norm, the following calculations can prove that $\textbf{H}$ is a contraction mapping.
	\begin{equation}
		\begin{aligned}
			\left\|\textbf{H}_{q_1\left(\mathcal{X}^{\prime},(c^{\prime}, e^{\prime})\right)}-\textbf{H}_{q_2\left(\mathcal{X}^{\prime},(c^{\prime}, e^{\prime})\right)}\right\|_{\infty}
			=&\mathbf{E}\left[\lambda q_1\left(\mathcal{X}^{\prime}, (c^{\mathrm{max}}, e^{\mathrm{max}})\right)-\lambda q_2\left(\mathcal{X}^{\prime}, (c^{\mathrm{max}}, e^{\mathrm{max}})\right) \right] \\
			\leq& \mathbf{E}\left[\lambda \left|  q_1\left(\mathcal{X}^{\prime}, (c^{\mathrm{max}}, e^{\mathrm{max}})\right)- q_2\left(\mathcal{X}^{\prime}, (c^{\mathrm{max}}, e^{\mathrm{max}})\right) \right| \right] \\
			\leq& \mathbf{E}\left[\lambda \mathop{\arg\max} _{(c^{\prime}, e^{\prime})} \left|q_1\left(\mathcal{X}^{\prime}, (c^{\prime}, e^{\prime})\right)-q_2\left(\mathcal{X}^{\prime}, (c^{\prime}, e^{\prime})\right)\right| \right] \\
			=& ~\lambda\left\|q_1\left(\mathcal{X}^{\prime},(c^{\prime}, e^{\prime})\right)-q_2\left(\mathcal{X}^{\prime},(c^{\prime}, e^{\prime})\right)\right\|_{\infty}
		\end{aligned}
		\label{b10}
	\end{equation}
	
	According to equations (\ref{b9}) and (\ref{b5}), $E\left\{\Psi^{j}(x)\right\}$ can be expressed as
	\begin{equation}
		\begin{aligned}
			\mathbf{E}\left\{\Psi^{j}(x)\right\} &= \mathbf{E}\left\{u \left( \mathcal{X} _ { j } , \Phi \left( \mathcal{X} _ { j } \right) \right)+\gamma \cdot Q\left(\mathcal{X}^{\prime},\left(c^{\mathrm{max}}, e^{\mathrm{max}}\right)\right) -Q\left(\mathcal{X}^{*},(c^{*}, e^{*})\right)\right\}\\
			&=\textbf{H}_{Q\left(\mathcal{X}^{\prime},(c^{\prime}, e^{\prime})\right)}-Q\left(\mathcal{X}^{\prime},(c^{\prime}, e^{\prime})\right)\\
			&=\textbf{H}_{Q\left(\mathcal{X}^{\prime},(c^{\prime}, e^{\prime})\right)}-\textbf{H}_{Q\left(\mathcal{X}^{*},(c^{*}, e^{*})\right)}.
		\end{aligned}
		\label{b11}
	\end{equation}
	
	The last step in equation (\ref{b11}) is because $Q\left(\mathcal{X}^{\prime},(c^{\prime}, e^{\prime})\right)$ is a constant. Finally, we can prove that Condition (3) in Lemma 4.2 satisfies through Equation (39). In the calculation process of Equation (39), the first step can be derived based on Equation (38), the second step can be derived based on Equation (34), and the last step can be derived based on Equation (32).
	\begin{equation}
		\begin{aligned}
			\left\|\mathbf{E}\left\{\Psi^{j}(x)\right\}\right\|_{\infty} &=\left\|\textbf{H}_{Q\left(\mathcal{X}^{\prime},(c^{\prime}, e^{\prime})\right)}-\textbf{H}_{Q\left(\mathcal{X}^{*},(c^{*}, e^{*})\right)}\right\|_{\infty}\\
			&\leq \delta \left\| Q\left(\mathcal{X}^{\prime},(c^{\prime}, e^{\prime})\right)- Q\left(\mathcal{X}^{*},(c^{*}, e^{*})\right) \right\|_{\infty} \\
			&= \delta \left\|  \triangle^{j}\left(\mathcal{X},(c, e)\right)  \right\|_{\infty}
		\end{aligned}
		\label{b12}
	\end{equation}
	
	\textbf{e.} \textbf{Condition 4) in lemma \ref{lemma1}.}
	Next is the proof of the condition 4) in lemma \ref{lemma1}. The following calculation uses equation (\ref{b5}), equation (\ref{b9}) and the properties of variance.
	\begin{equation}
		\begin{aligned}
			\mathbf{Var}\left\{\Psi^{j}\left(\mathcal{X},(c, e)\right)\right\}
			=&\mathbf{E}\left\{{ u\left( \mathcal{X} _ { j } , \Phi \left( \mathcal{X} _ { j } \right) \right)+\gamma \cdot Q\left(\mathcal{X}^{\prime},\left(c^{\mathrm{max}}, e^{\mathrm{max}}\right)\right) -Q\left(\mathcal{X}^{*},(c^{*}, e^{*})\right)}\right.\\
			& \left.{- \textbf{H}_{Q\left(\mathcal{X}^{\prime},(c^{\prime}, e^{\prime})\right)}+Q\left(\mathcal{X}^{*},(c^{*}, e^{*})\right)}\right\} \\
			=&\mathbf{E}\left\{ u\left( \mathcal{X} _ { j } , \Phi \left( \mathcal{X} _ { j } \right) \right)+\gamma \cdot Q\left(\mathcal{X}^{\prime},\left(c^{\mathrm{max}}, e^{\mathrm{max}}\right)\right)  - \textbf{H}_{Q\left(\mathcal{X}^{\prime},(c^{\prime}, e^{\prime})\right)}\right\} \\
			=&\mathbf{E}\left\{ {u\left( \mathcal{X} _ { j } , \Phi \left( \mathcal{X} _ { j } \right) \right)+\gamma \cdot Q\left(\mathcal{X}^{\prime},\left(c^{\mathrm{max}}, e^{\mathrm{max}}\right)\right)  }\right. \\
			&\left.{ - \mathbf{E}\left[ u\left( \mathcal{X} _ { j } , \Phi \left( \mathcal{X} _ { j } \right) \right)+\gamma \cdot Q\left(\mathcal{X}^{\prime},\left(c^{\mathrm{max}}, e^{\mathrm{max}}\right)\right)  \right]}\right\} \\
			=&\mathbf{Var}\left\{ u\left( \mathcal{X} _ { j } , \Phi \left( \mathcal{X} _ { j } \right) \right)+\gamma \cdot Q\left(\mathcal{X}^{\prime},\left(c^{\mathrm{max}}, e^{\mathrm{max}}\right)\right) \right\}\\
			\leq& C(1+ \left\|  \triangle^{j}\left(\mathcal{X},(c, e)\right)  \right\|_{\infty}),
		\end{aligned}
		\label{b13}
	\end{equation}
	where $C$ is a constant and the last step in the above calculation is because of $u\left( \mathcal{X} _ { j } , \Phi \left( \mathcal{X} _ { j } \right) \right)$is bounded and $ Q\left(\mathcal{X}^{\prime},\left(c^{\mathrm{max}}, e^{\mathrm{max}}\right)\right)$ at most linearly.
	
\end{proof}

The above proof process has proved that the four conditions in lemma \ref{lemma1} are satisfied, and based on this, theorem \ref{theorem1} can be proved.

\section{Performance Evaluation}
\label{sec:performance_evaluation}
In this section, we conduct some simulation experiments to explore the impact of various parameters on FL-based DRL training and the performance of FL-based DRL training and centralized DRL training.

\subsection{Experiment Settings}
To evaluate the capability of our proposed method, we simulate an edge computing system. The channel gain of the network connection between the IoT device and EN is defined as 6 levels. Besides, the settings for the DRL agent are shown in table \ref{drl_agent}.

\begin{table*}[htbp]
	\centering  
	\fontsize{6.5}{8}\selectfont  
	\begin{threeparttable}  
		\caption{The settings for the DRL agent}  
		\label{drl_agent}  
		\begin{tabular}{m{6cm}<{\centering}m{2cm}<{\centering}}
			\toprule  
			\multicolumn{1}{c}{\bf Parameters }&\multicolumn{1}{c}{\bf Values}\cr
			\midrule 
			The number of full connected layers&$2$\cr
			The number of neurons&$200$\cr
			Activation function&$\tanh$\cr
			Discount factor&$0.9$\cr
			Replay memory capacity&$5000$\cr
			Update interval of the target $Q$ network&$250$\cr
			\bottomrule  
		\end{tabular}  
	\end{threeparttable}  
\end{table*}
\subsection{Analysis of Exploration Probability}

\begin{figure}[htbp]
	\centering
	\includegraphics[width=0.8\textwidth]{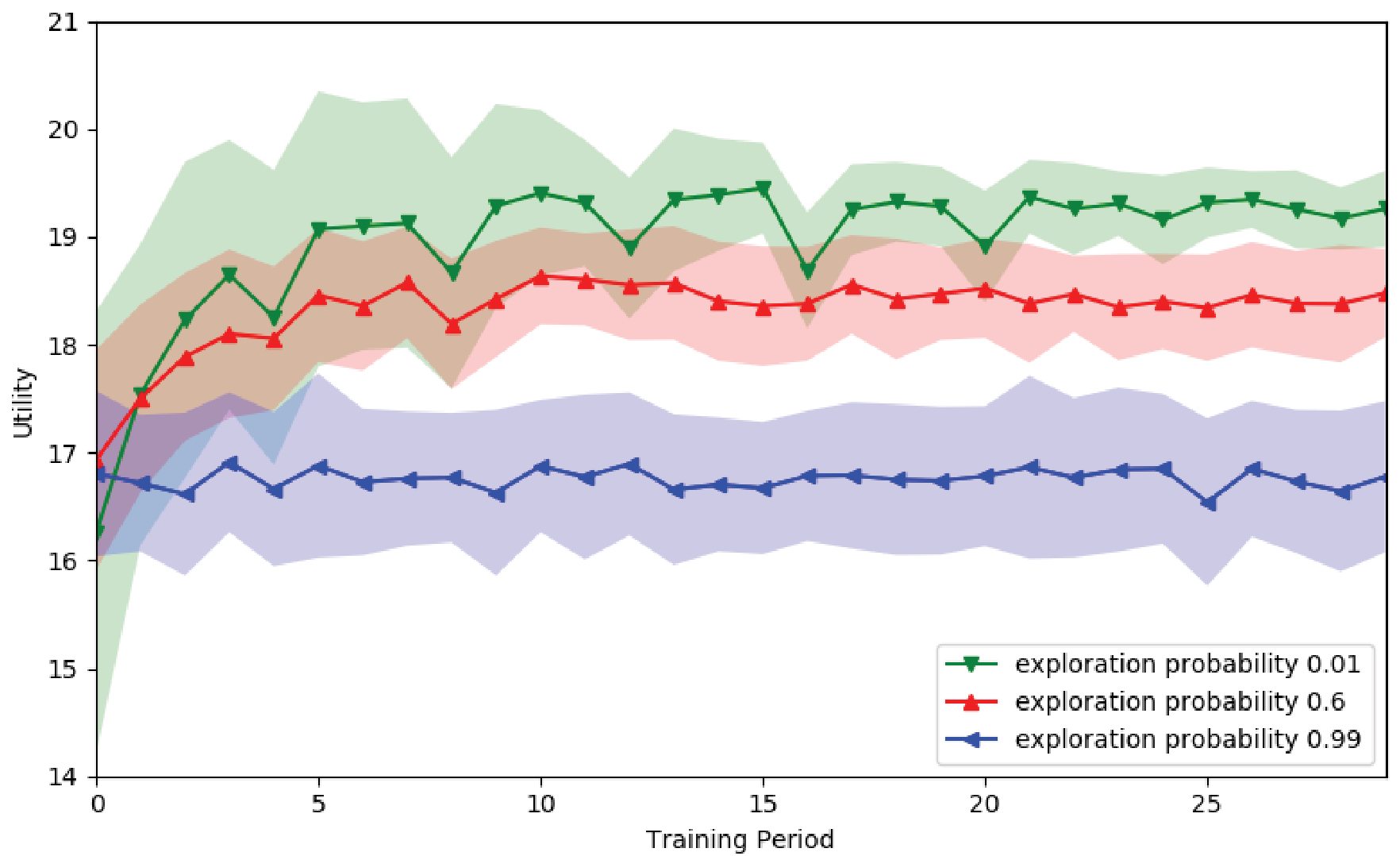}
	\caption{Exploration effect.}
	\label{fig:exploration_effect}
\end{figure}

A strategy called $\epsilon-greedy$ is used in DDQN, which is a similar approach to greedy strategy. Since there is no corresponding $Q$ value for the state-action pair that has not appeared and these combinations will never be tried if the greedy strategy is used directly, so the concept of exploration and utilization is integrated. On the one hand, we must seek to maximize the utility from the information already known, and on the other hand, we must explore what is not known in the environment. The $\epsilon-greedy$ is to add exploration rates based on a greedy strategy. For each decision, there will be a probability to randomly select actions to explore, and there is a probability to use a greedy strategy to maximize the utility. Therefore, we first explored the effect of exploration probability on utility in the experiment.

We selected three values for exploration probability to test and one hundred times of experiments are taken for collecting statics. In Fig. \ref{fig:exploration_effect}, solid lines and the shallow area around them represent mean values and the standard deviation, respectively. It can be found that the average value of utility is the highest when exploration probability = $0.01$, and the standard deviation is the smallest. When exploration probability = $0.6$, the average value of utility will decrease slightly and the standard deviation will increase. When exploration probability = $0.99$, the average value of utility will be greatly reduced and the range of standard deviation will be further increased. It can be found that a higher exploration probability will have a negative impact on performance of IoT devices. This is due to the fact that IoT devices in most cases choose to explore and do not fully utilize the previous training results, resulting in poor performance. Based on the results, accordingly, we choose the final exploration possibility as $0.01$ in the following experiments.

\subsection{Analysis of Task Generate Probability}
When the IoT device is applied, the workload of some IoT devices is heavy, and some IoT devices are relatively idle. Even for an IoT device, the workload will change greatly during different time periods. Therefore, we further explore the effect of workload.
\begin{figure}[htbp]
	\centering
	\includegraphics[width=0.8\textwidth]{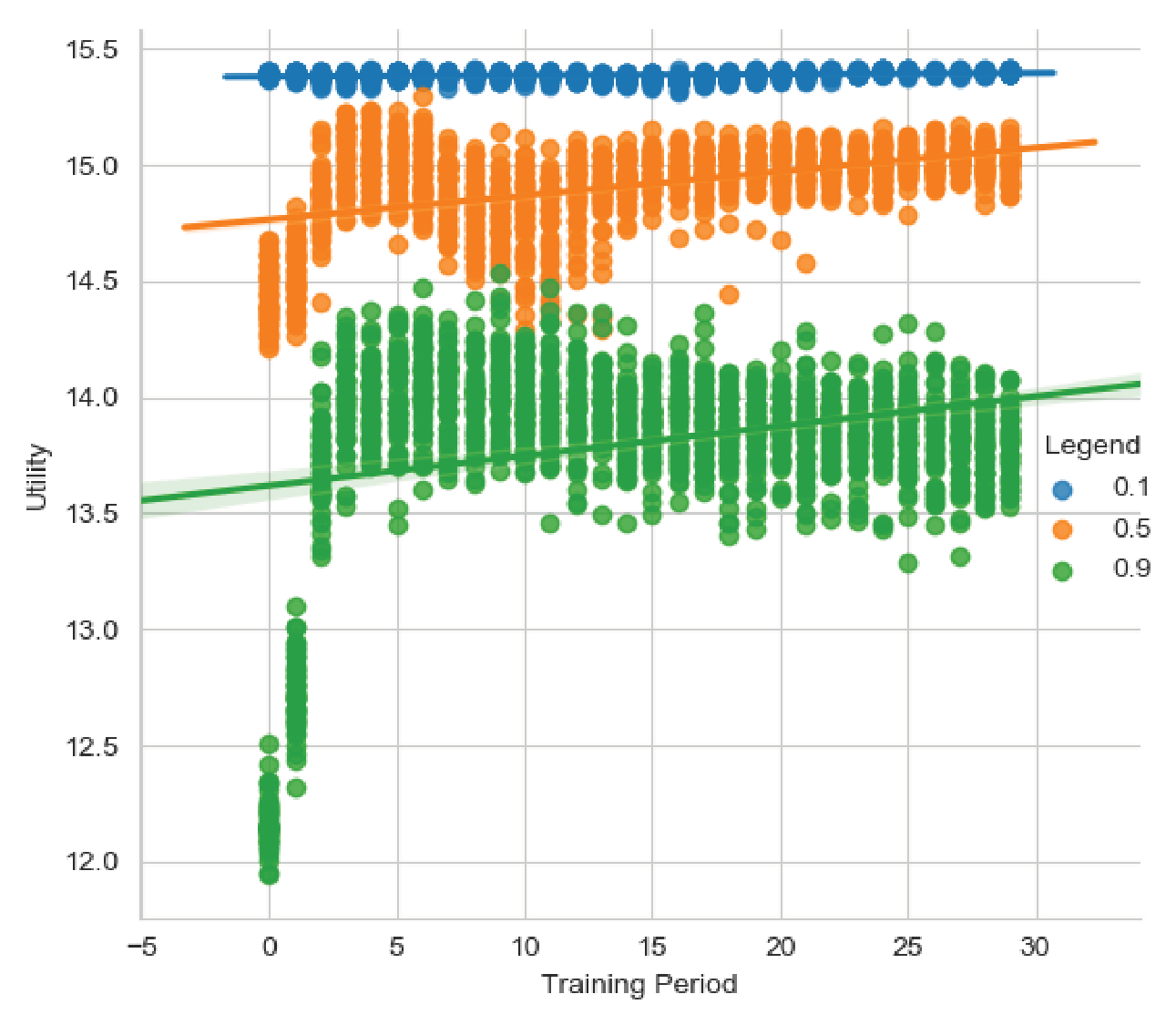}
	\caption{Utility varies with workload.}
	\label{fig:task_1}
\end{figure}

Under different probabilities of task generation, the performance of the proposed algorithm is showed in Figure \ref{fig:task_1}. When the task generates probability = $0.1$, the IoT device can run at a higher utility and the standard deviation of utility is also small, namely, the performance is very good and stable. When the task generates probability = $0.5$, the workload of the IoT device will be relatively heavy. In the initial stage of training, the utility is lower and the standard deviation is large, that is, the performance is poor and unstable. However, as the training progressed, the utility stabilized at a higher level and the standard deviation is greatly reduced. When the task generate probability = $0.9$, the workload of the IoT device is extremely heavy and the overall utility value is very low. It can be found that, when the exploration probability is higher, the performance of the proposed algorithm will be worse and more unstable. In other words, a higher exploration probability will have a negative impact on the performance. This may be due to the limitation of IoT devices' performance, because a high exploration probability will produce a large number of tasks that IoT devices cannot cope with, resulting in high queuing delay and a large number of task failures. In addition, due to the limited energy of IoT devices, heavy tasks will result in low or no energy allocated for each task, thus limiting the performance of IoT devices. However, after a period of training, the IoT device's utility value is improved and stabilized at a relatively high level, which proves the effectiveness of the algorithm under a heavy workload.

\begin{figure}[htbp]
	\centering
	\includegraphics[width=0.8\textwidth]{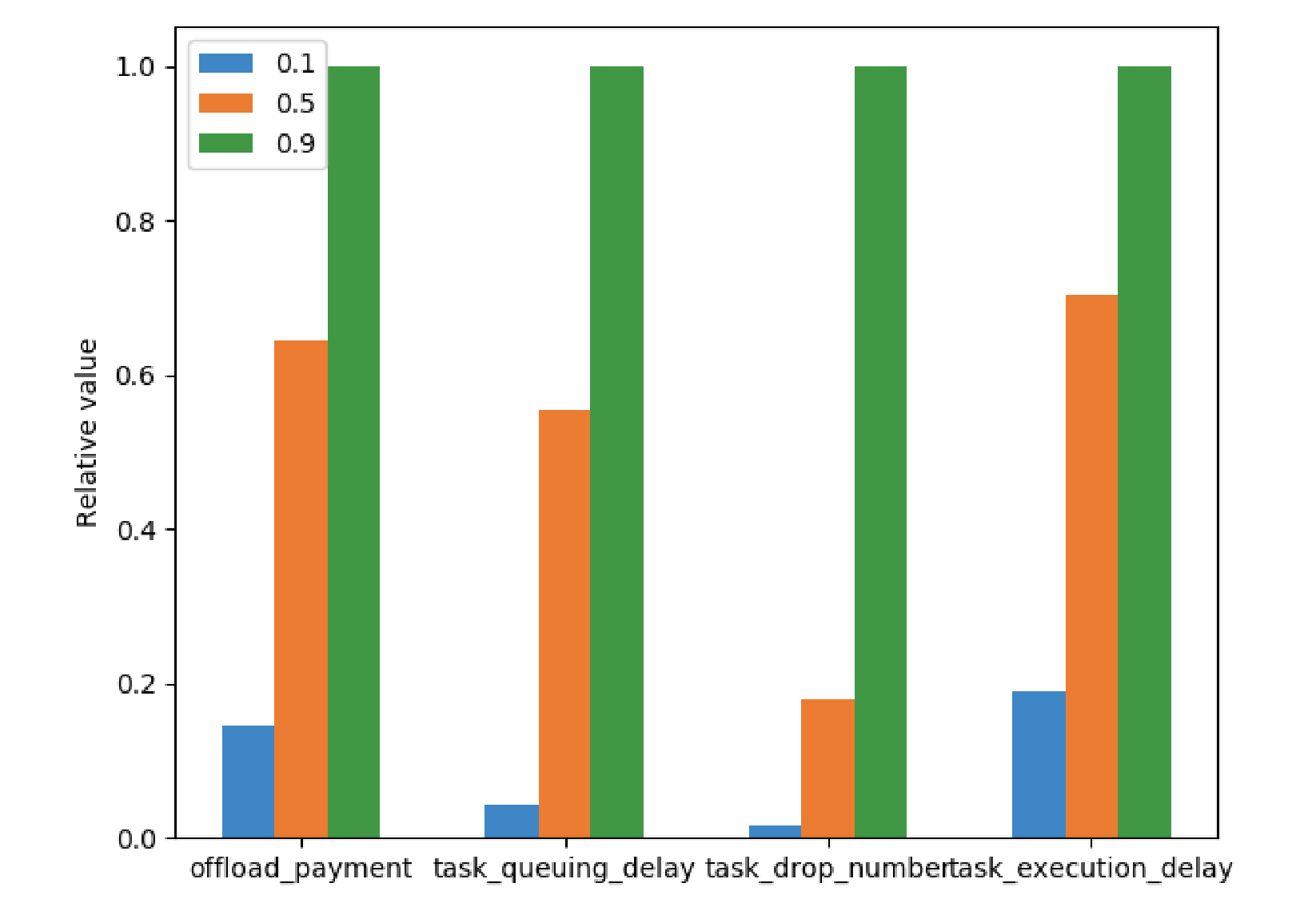}
	\caption{Key parameters that vary with workload.}
	\label{fig:task_2}
\end{figure}

To further explore the performance of our proposed algorithm under different workloads, we selected four key parameters queuing delay, offload payment, task execution delay, and task drop number for comparison. The performance of the algorithm presented under different workloads is shown in Fig. \ref{fig:task_2}. When the task generates probability = $0.1$, the IoT device needs to handle few tasks and is very good at all key parameters. When the task generates probability = $0.5$, the workload of the IoT device is increased, and various key parameters are beginning to increase. However, the increase of the task drop number is small, which means that the IoT device can make most tasks complete normally under the current workload. When the task generate probability = $0.9$, the workload of the IoT device is further increased, which will make the key parameters perform worse.

In summary, Fig. \ref{fig:task_1} and Fig. \ref{fig:task_2} show the performance of one IoT device with varying the workload on itself, which corroborates that our work can adapt to the workload variation and be converged.

\subsection{Analysis of Energy Generate Probability}
Differences in the deployment conditions of IoT devices may cause them to face different probabilities of energy generation. Therefore, in order to test the performance of the algorithm under different energy generation, we conducted the following experiments.

\begin{figure}[htbp]
	\centering
	\includegraphics[width=0.8\textwidth]{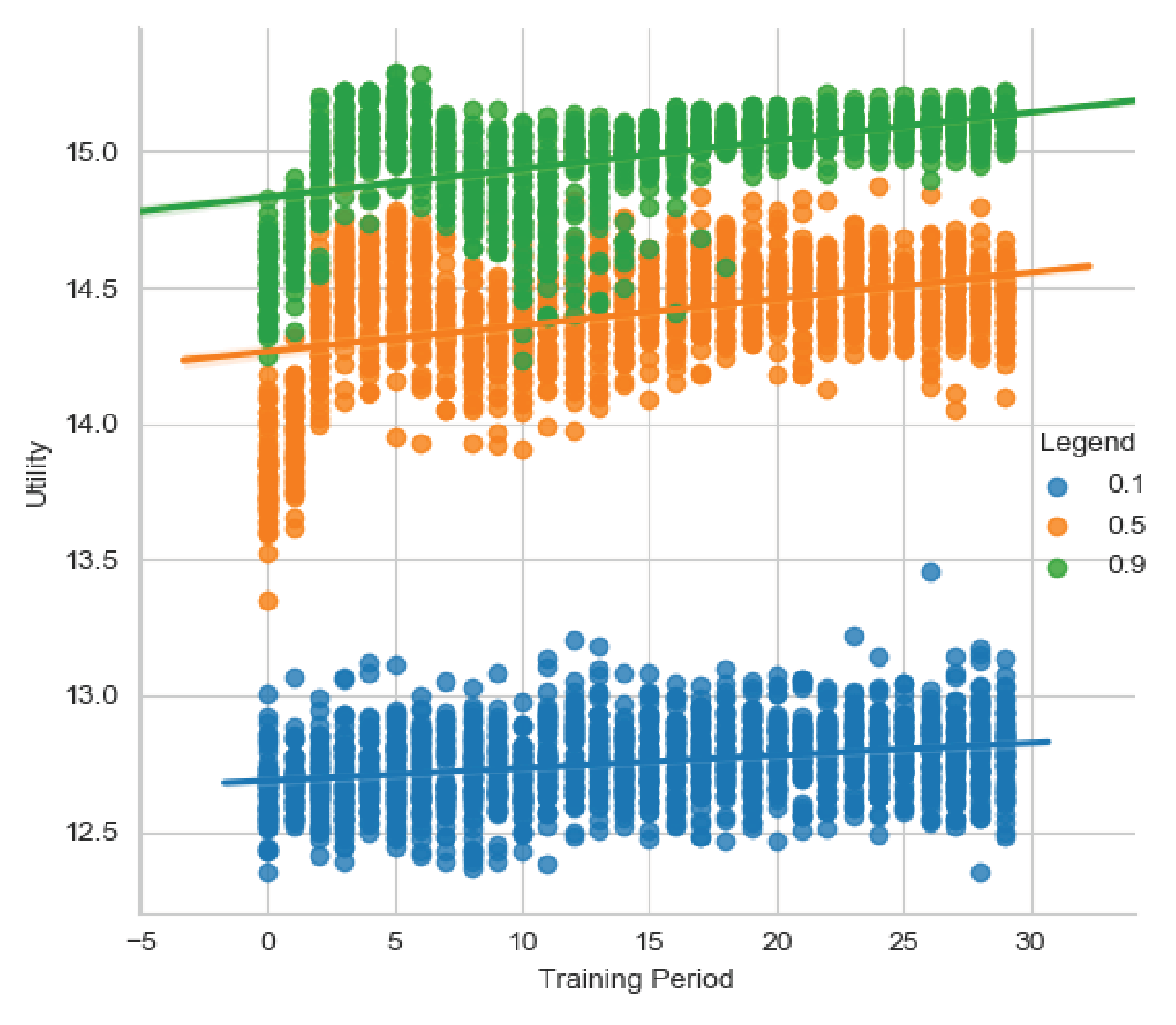}
	\caption{Utility varies with energy generate probability.}
	\label{fig:energy_1}
\end{figure}

Under different probabilities of energy generation, the performance of the proposed algorithm is showed in Figure \ref{fig:energy_1}. When energy generation probability = $0.1$, IoT devices obtain energy at a very slow speed, resulting in energy shortage which leads to a lower utility level. When energy generation probability = $0.5$, the speed of obtaining the energy of IoT equipment increases obviously. Besides, the utility is improved significantly overall and shows a tendency to gradually increase and eventually stabilize. When energy activates probability = $0.9$, the energy of the IoT device is more abundant. The utility is further improved, and the standard deviation is reduced, that is, the IoT device performs better and more stable. It can be found that the higher the energy generate probability is, the more sufficient the energy of IoT equipment is, so higher power can be used to perform tasks and the overall performance is better.

\begin{figure}[htbp]
	\centering
	\includegraphics[width=0.8\textwidth]{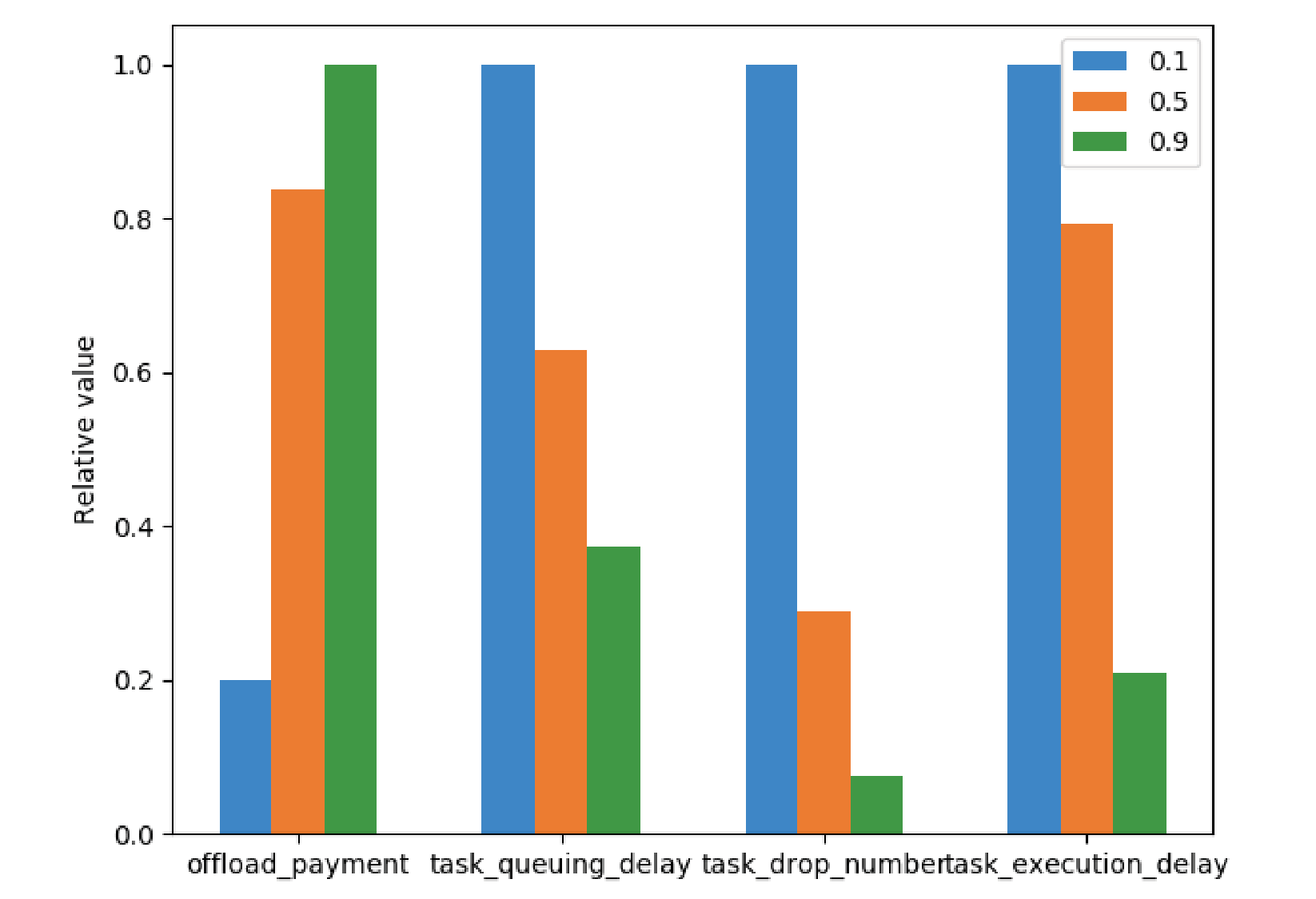}
	\caption{Key parameters that vary with energy generate probability.}
	\label{fig:energy_2}
\end{figure}

To more closely compare the performance of the proposed algorithm in different energy environments, we selected some key parameters for comparison. The algorithm proposed in different energy environments is shown in Fig. \ref{fig:energy_2}. When energy generation probability = $0.1$, the queuing delay, task drop number, and task execution delay is very high which may be due to insufficient energy, resulting in the task can not be completed in time. However, the offload payment at this time is very low, which means that most tasks are executed locally rather than being transferred to EN. When energy generation probability = $0.5$, the IoT device has more energy available, and the offload payment is further improved but other parameters are reduced. This indicates that more tasks are transferred to EN and the task execution is faster. When energy generation probability = $0.9$, the energy of the IoT device is sufficient. In this case, more tasks are transferred to EN for execution and the proposed algorithm performs better. Therefore, it can be seen that the higher the energy generate probability is, the more tasks submitted to EN for running. That is to say, compared with local execution of IoT devices, transferring tasks to EN will consume more energy but obtain more powerful task processing capability.

Therefore, experiments show that the proposed algorithm can adapt to different energy to generate probability and perform convergence.

\subsection{Analysis of the Number of IoT Devices}
When the FL-based DRL training is running, it will be trained using many IoT devices in the area. However, the number of IoT devices used is difficult to determine, so the impact of the number of different IoT devices will be explored next.

\begin{figure}[htbp]
	\centering
	\includegraphics[width=0.8\textwidth]{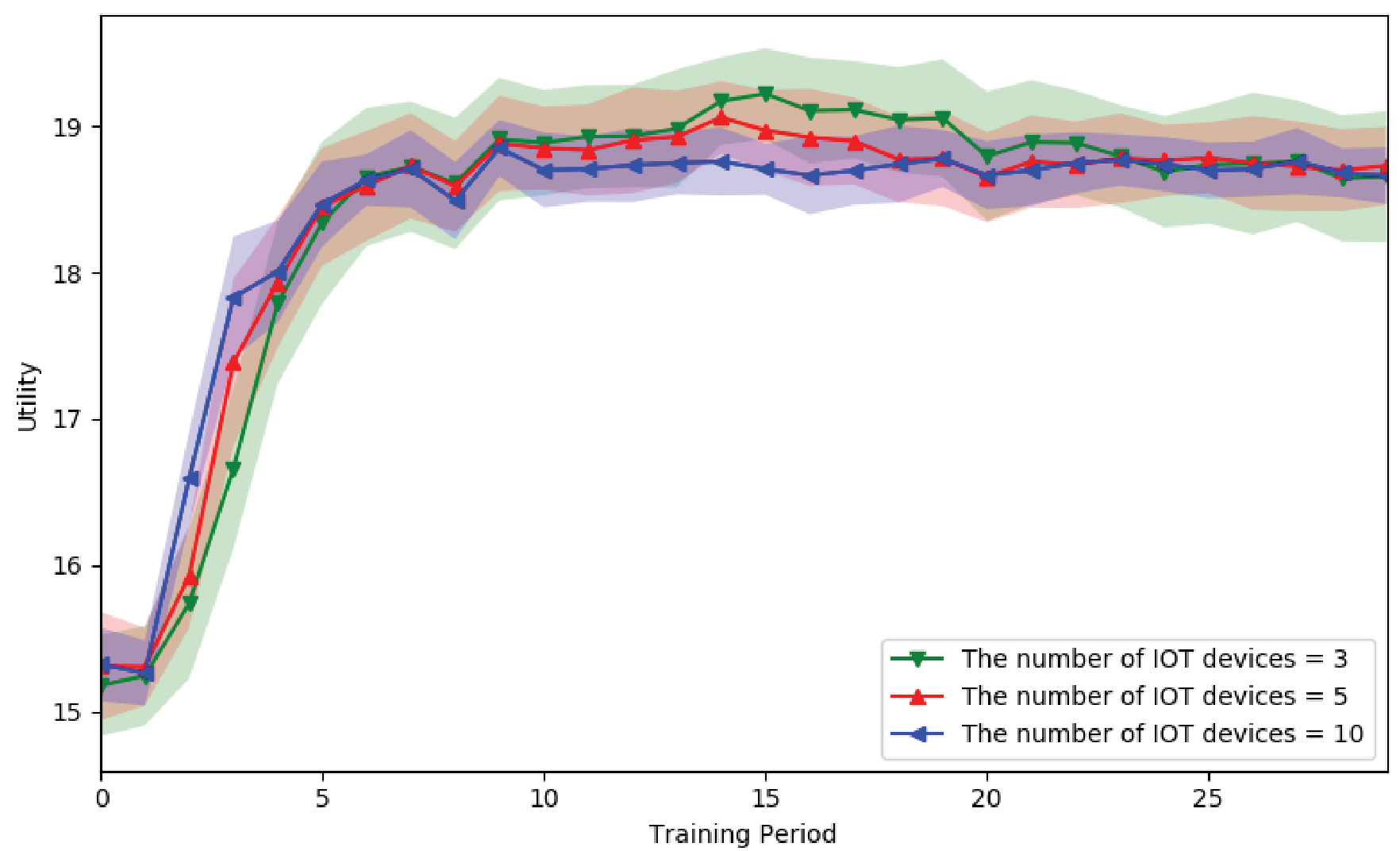}
	\caption{Performance under different numbers of IoT devices.}
	\label{fig:agents}
\end{figure}

The experiment selected different numbers of IoT devices for testing and collected their utilities as a comparison. As shown in Fig. \ref{fig:agents}, all their utilities show an upward trend at the early stage and remain stable at the later stage. However, the number of different IoT devices also has an impact on their utilities. On the one hand, when the number of IoT devices is small, its utility increases relatively slowly in the early days. On the other hand, for different IoT devices, all the utility will be stable at the same level in the later stage, but the number of IoT devices will have an impact on the standard deviation of the utility. When the number of IoT devices is small, the standard deviation of the utility will be large, that is, more IoT devices will make the performance more stable. We think that when the number of IoT devices is large, more environmental states can be learned in the same training cycle, thus contributing to the training progress of IoT devices in the early stage. Therefore, the more IoT devices in the early stage, the performance is better. However, after training convergence, the more IoT devices there are, the worse their performance will be. This may be because different IoT devices face different environmental states, so their model parameters are not applicable to all IoT devices, that is, aggregating model parameters of IoT devices under different environmental states cannot produce optimal results.
\begin{figure*}[htbp]
	\centering
	\subfigure[Comparison of offload payment]
	{\includegraphics[width=0.47\textwidth]{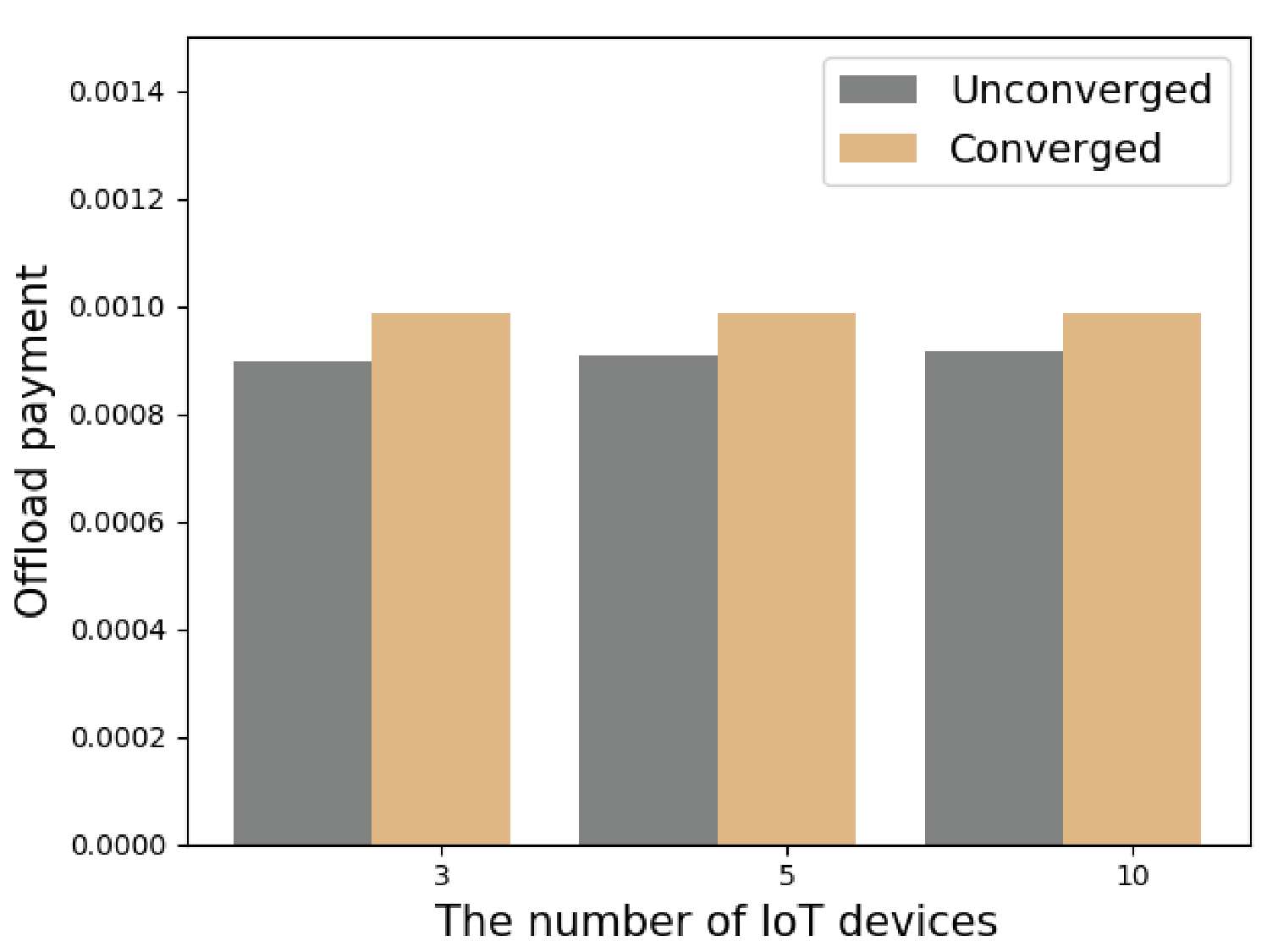}
		\label{histogram1}
	}
	\subfigure[Comparison of the number of task drop]
	{\includegraphics[width=0.47\textwidth]{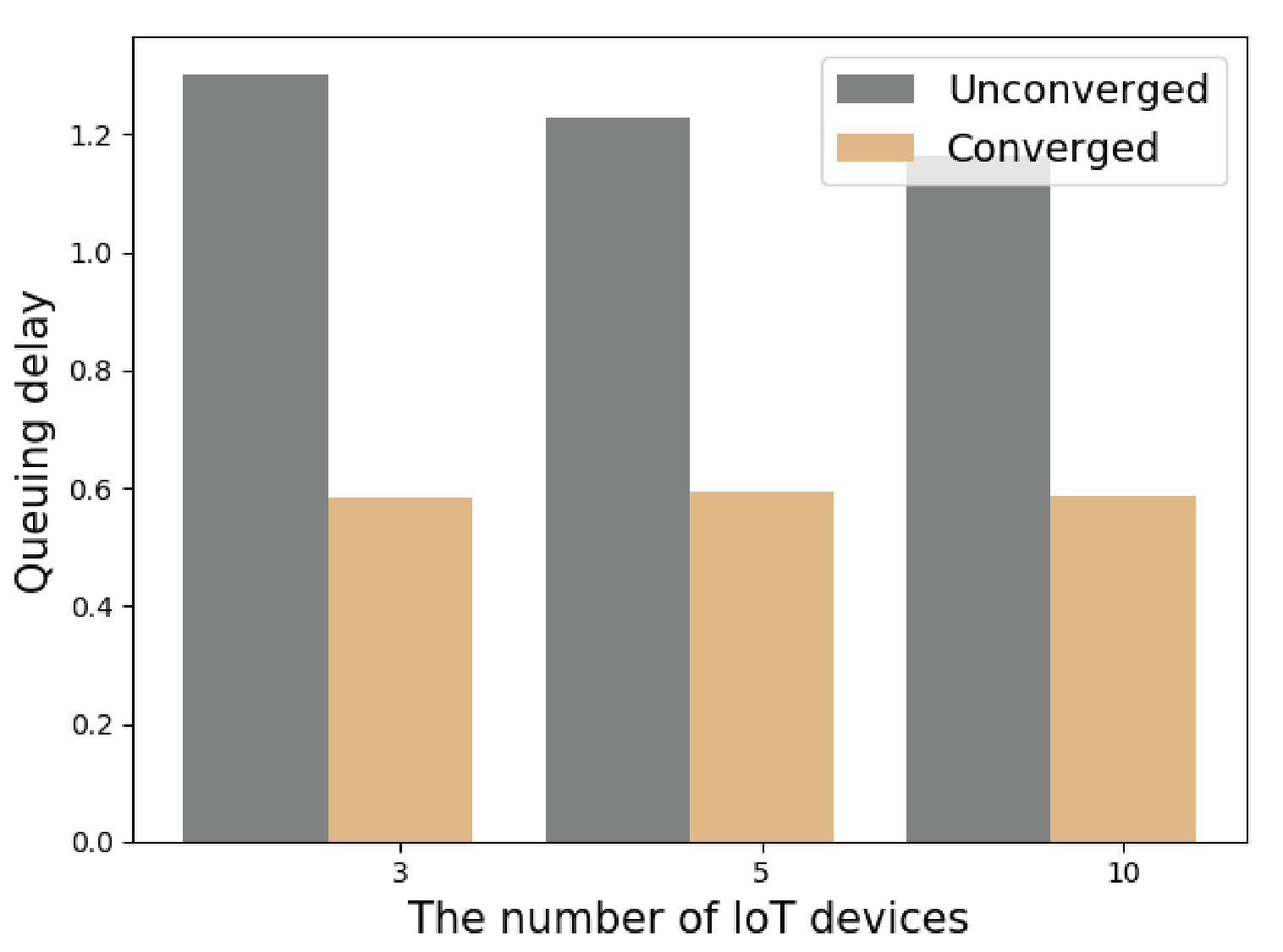}
		\label{histogram2}
	}
	\subfigure[Comparison of task execution delay]
	{\includegraphics[width=0.47\textwidth]{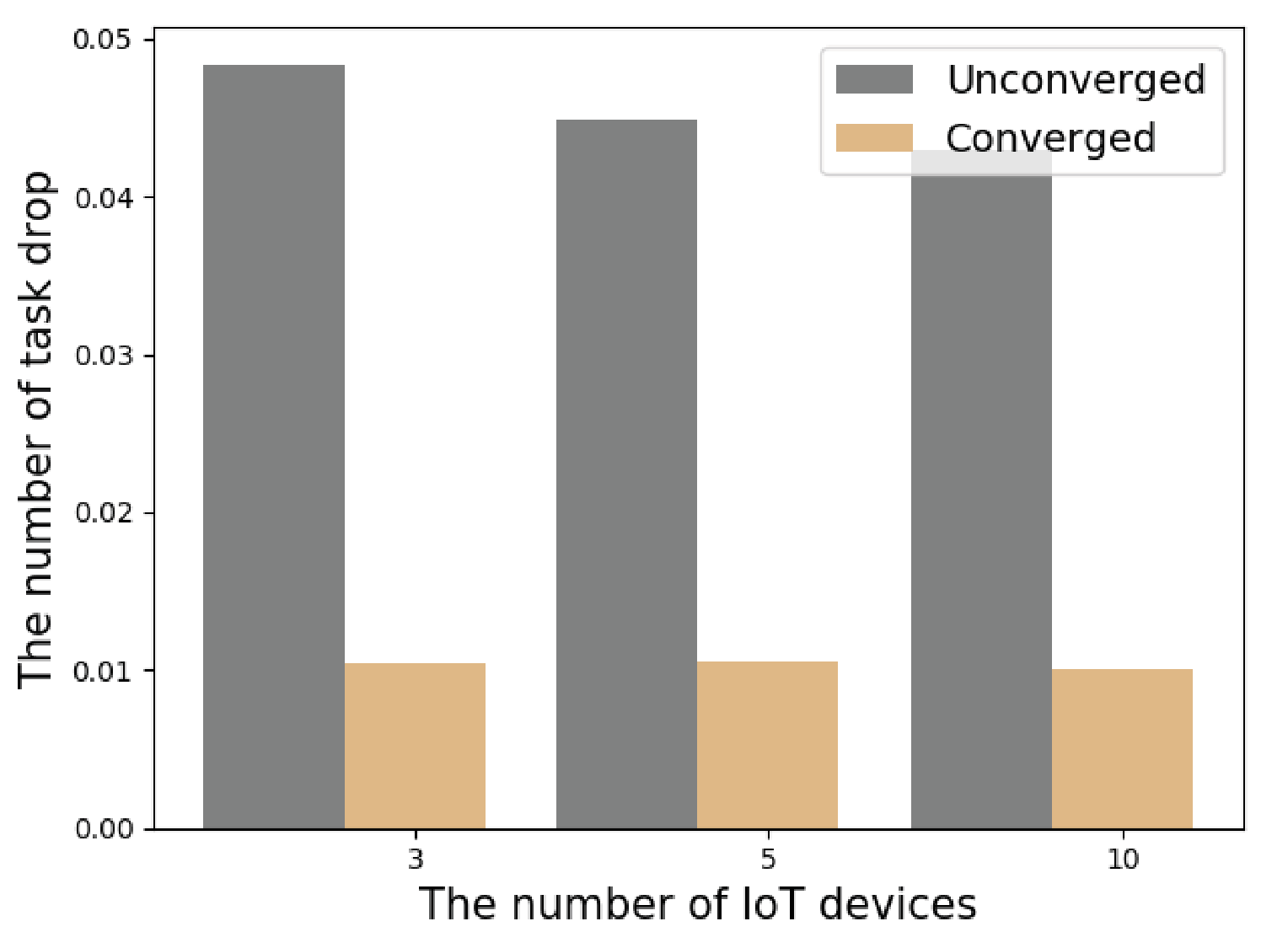}
		\label{histogram3}
	}
	\subfigure[Comparison of queuing delay]
	{\includegraphics[width=0.47\textwidth]{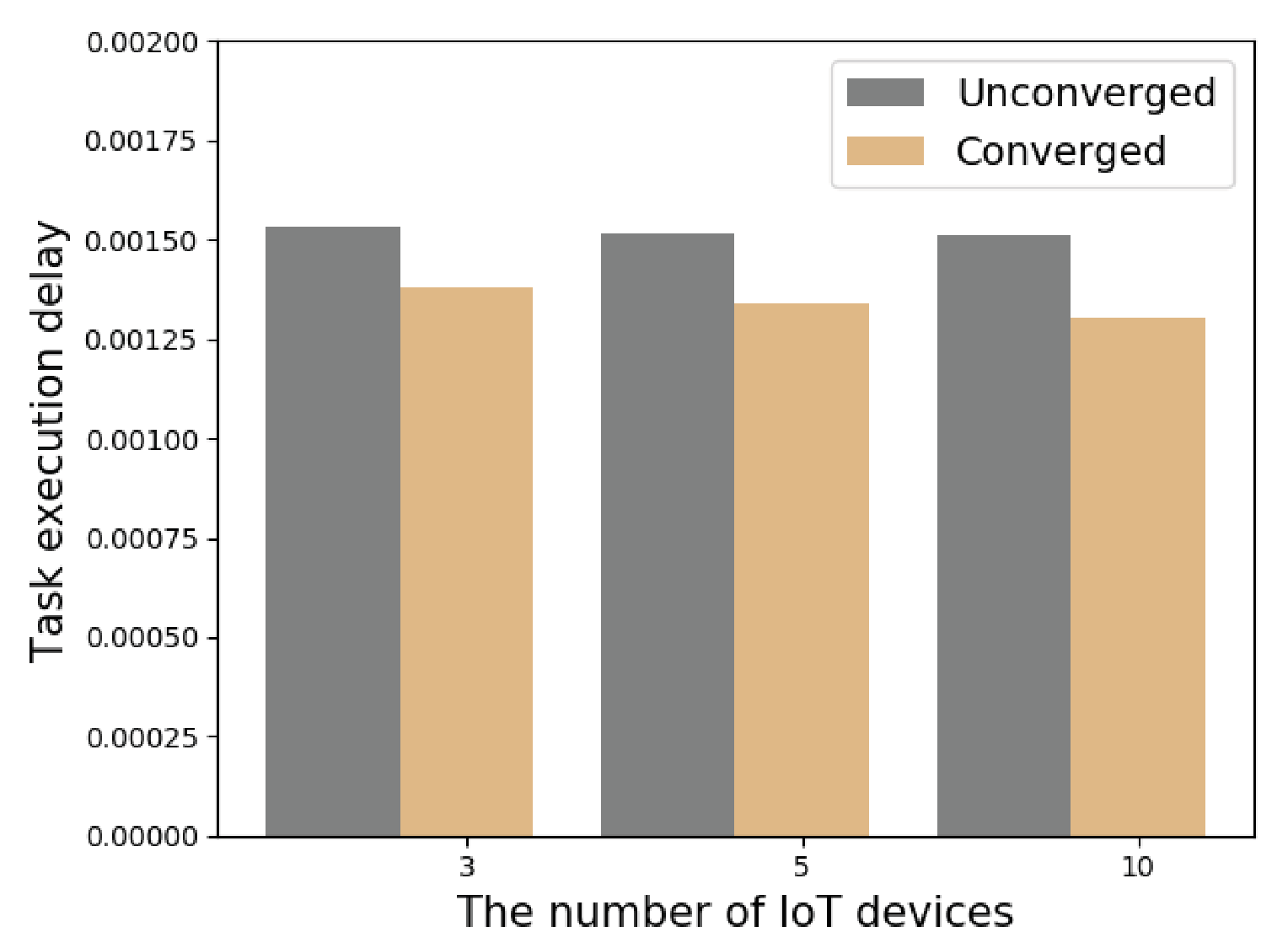}
		\label{histogram4}
	}
	
	\caption{Comparison of parameters before and after convergence for different number of IoT devices.}
	\label{fig:comparison_single}
\end{figure*}

To delve deeper into the impact of the number of different IoT devices, we conducted experiments on more detailed parameters. As shown in Fig. \ref{fig:comparison_single}, when comparing the convergence and non-convergence of a certain IoT device, it can be found that offload payment increases slightly after training convergence, but all other parameters decrease. The reason for this situation may be the increased use of EN after training convergence. Therefore, it costs more to use EN while other parameters are optimized. In addition, for the case of the number of different IoT devices, they are at the same level after the training convergence.

In summary, the more IoT devices, the faster the convergence rate during training, that is, the better the performance in the early stage. However, after training convergence, the performance of different IoT devices is at the same level.

\subsection{Analysis of Energy Consumption}
\begin{figure}[htbp]
	\centering
	\includegraphics[width=0.8\textwidth]{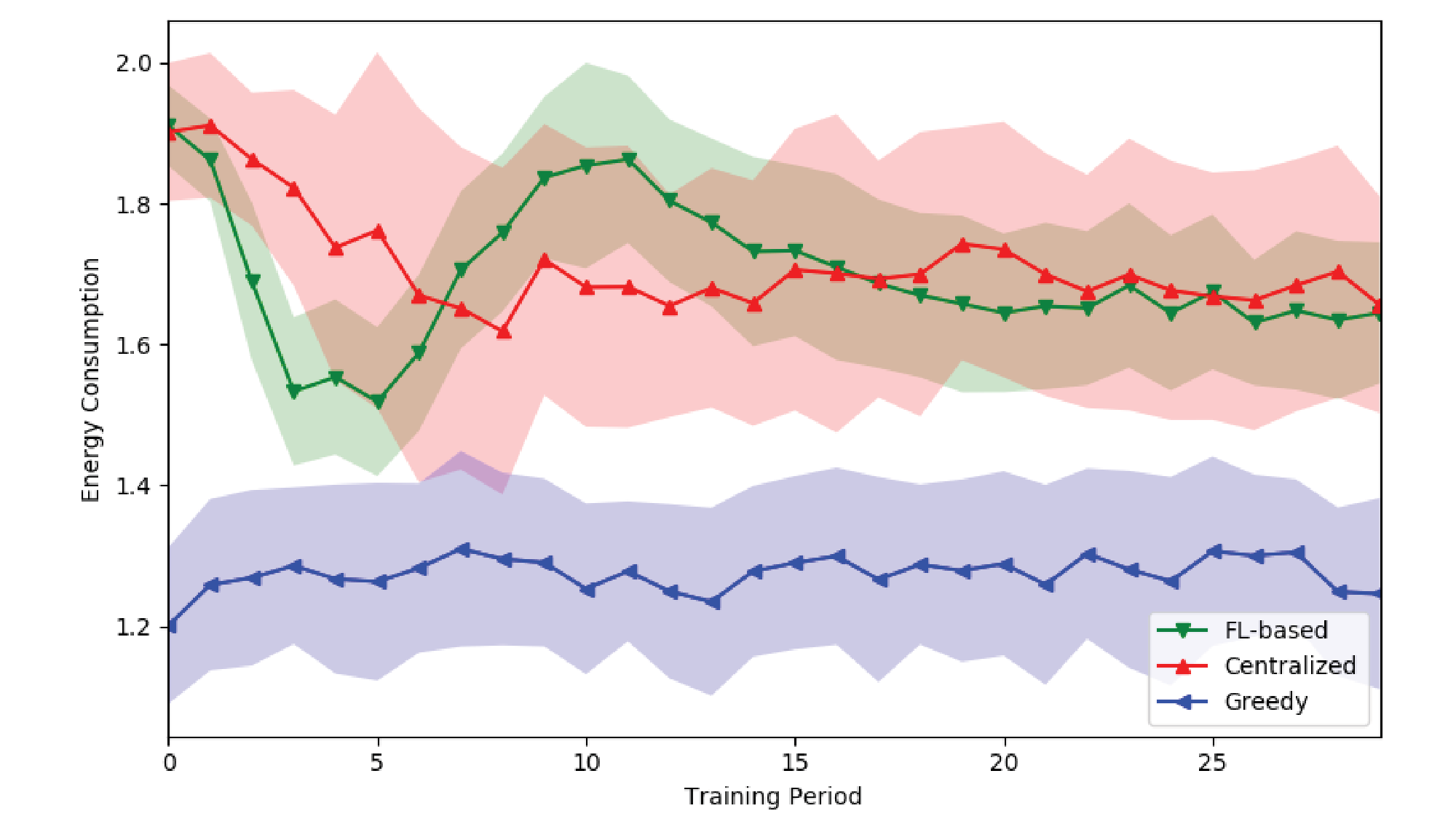}
	\caption{Comparison of energy consumption.}
	\label{fig:energy}
\end{figure}

For IoT devices, many devices use batteries as a power source, so they usually have some limitations in terms of energy. Therefore, energy consumption is a parameter that needs to be considered. As shown in Fig. \ref{fig:energy}, the energy consumption of both FL-based DRL training and centralized DRL training is higher than the greedy strategy. The reason may be that more energy has to be used to reduce the number of task drop and task execution delay. In addition, the average energy consumption of FL-based DRL training is comparable to that of centralized DRL training, but the standard deviation is relatively small.

In summary, our proposed algorithm may result in higher energy consumption, which may be due to higher power for local execution or data transmission. In this case, although the energy consumption may be at a high level, it is improved in terms of time delay, the number of task drop, etc. In addition, it is worth mentioning that the energy consumption of FL-based DRL training and centralized DRL training are at the same level.

\subsection{Analysis of Key Parameters in the System Model}
In order to deeply compare the characteristics of FL-based DRL training and centralized DRL training, some key parameters in the system model are selected for comparison and the first is the task execution delay. Task execution delay is the time from when the task leaves the task queue to completion, including the time of establishing a connection, the time of transferring task data and the time of executing the task. The task execution delay of the FL-based DRL training and centralized DRL training is shown in Fig. \ref{fig:single4}.

It can be seen that the FL-based DRL training has a higher task execution delay than the centralized DRL training task execution delay, and both of them stay stable after falling in the early stage. It is worth noting that the task execution delay of the FL-based DRL training has decreased more in the early stage and reached a steady-state faster, that is, the L-based DRL training has a faster convergence speed. 

In addition, the queuing delay of FL-based DRL training and centralized DRL training is compared. The queuing delay is the time it takes for a task to go from being placed in the task queue to being taken out of the task queue. The queuing delay of FL-based DRL training and centralized DRL training is shown in Fig. \ref{fig:single2}.

\begin{figure}[htbp]    
	\begin{minipage}[t]{0.5\linewidth}    
		\centering    
		\includegraphics[width=1.0\textwidth]{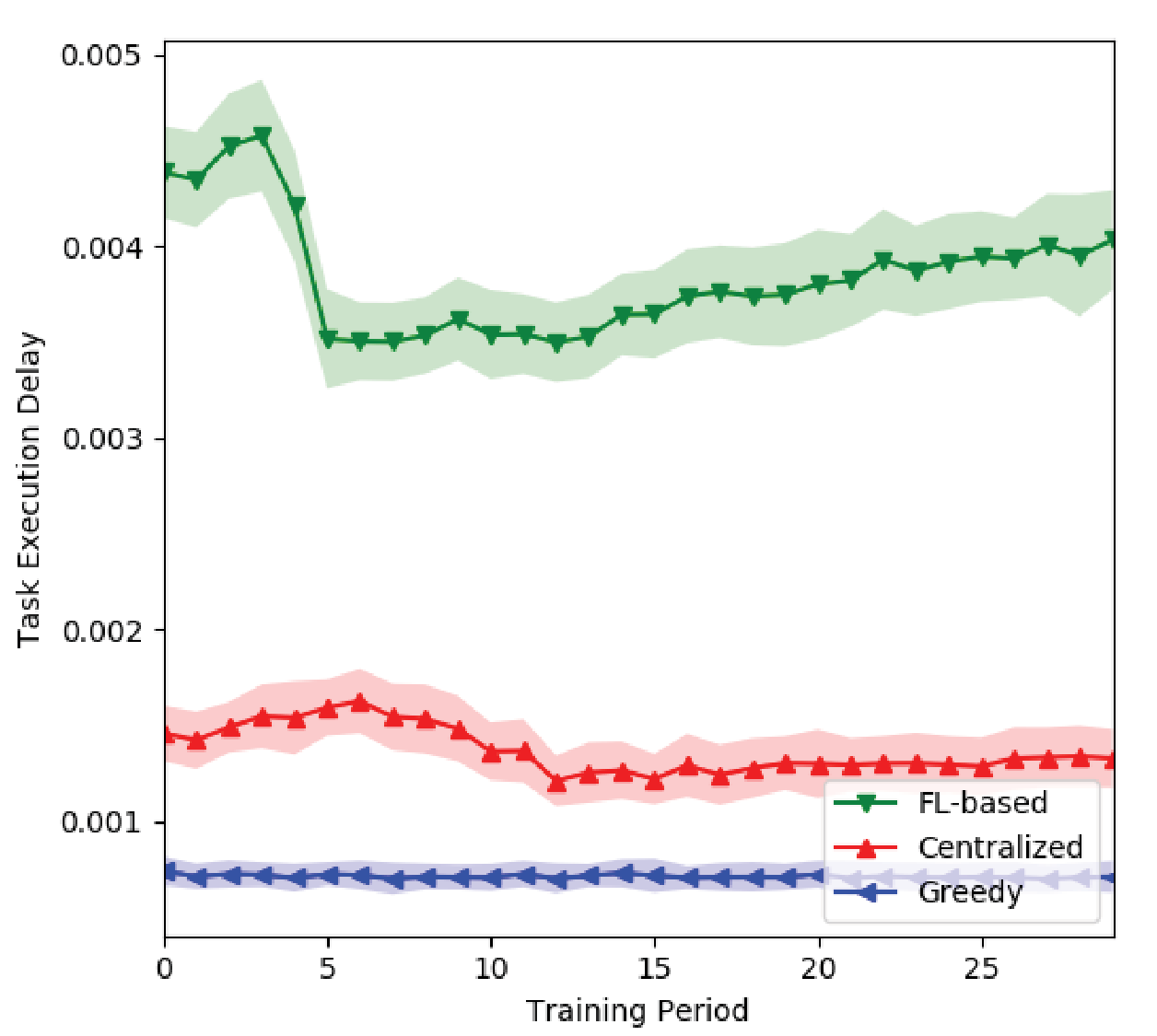}    
		\caption{Comparison of task execution delay.}    
		\label{fig:single4}
	\end{minipage}%
	\hfill    
	\begin{minipage}[t]{0.5\linewidth}    
		\centering    
		\includegraphics[width=1.0\textwidth]{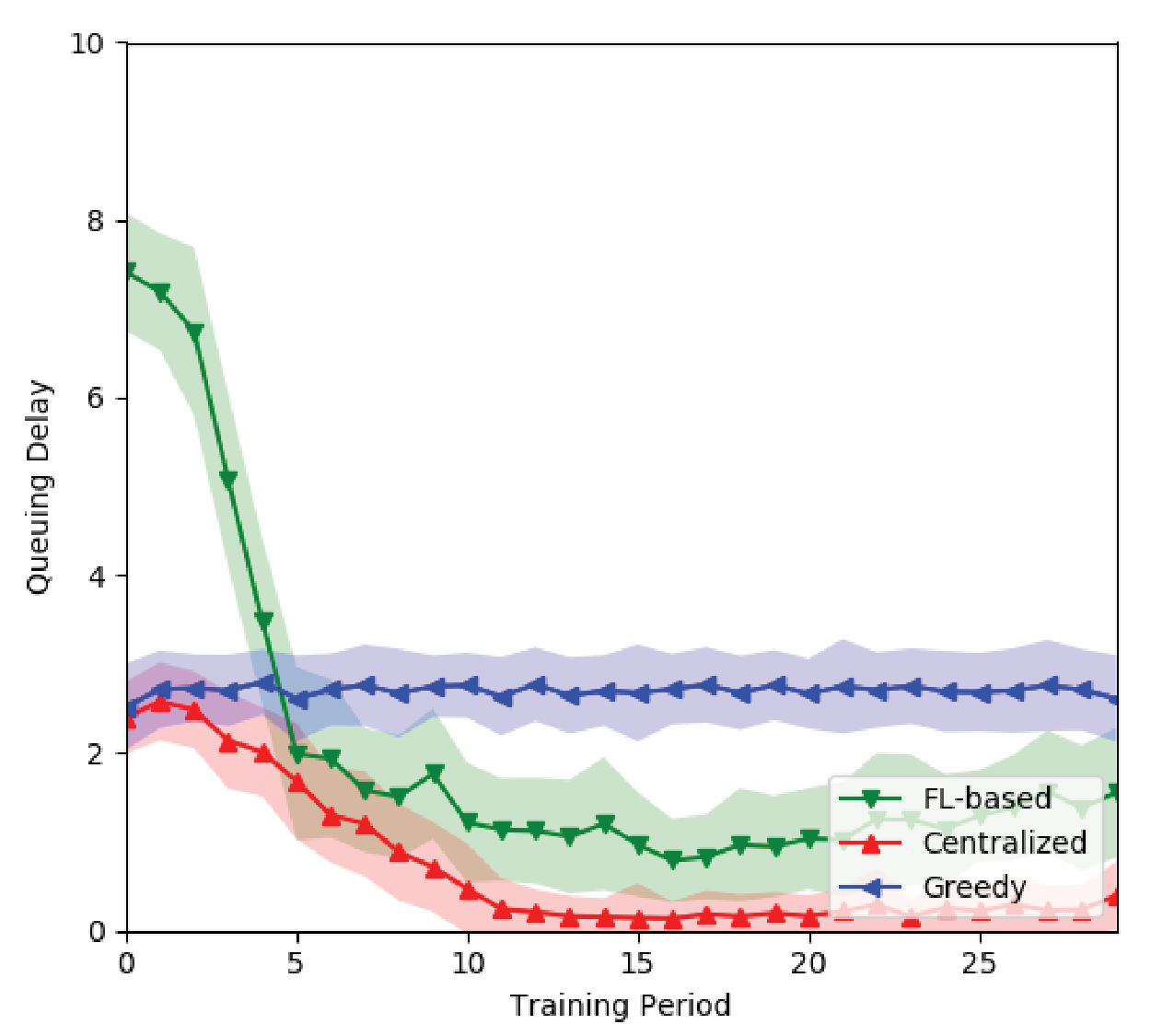}
		\caption{Comparison of queuing delay.}    
		\label{fig:single2}
	\end{minipage}    
\end{figure}

The queuing delay of the FL-based DRL training is generally lower than that of the centralized DRL training, but the queuing delay of the FL-based DRL training is higher overall. 

Next is the comparison of the number of task drop of FL-based DRL training and centralized DRL training. The number of task drop refers to the number of tasks lost when the number of newly generated tasks in the task queue reaches the upper limit. The number of task drop of FL-based DRL training and centralized DRL training is shown in Fig. \ref{fig:single3}.

The number of task drop of FL-based DRL training is much higher than the centralized DRL training in the early stage, but the number of task drop of FL-based DRL training decreases rapidly and ultimately stable not far from the centralized DRL trained. Overall, both of them have significantly decreased in the early stage which indicates the effectiveness of the training, but the number of task drop of FL-based DRL training will be higher than the centralized DRL training. This may be due to the queuing delay and the task execution delay of the FL-based DRL training are higher so that many tasks in the task queue cannot be completed in time. Therefore, the task queue in FL-based DRL training is more easily full, which leads to the loss of tasks.

\begin{figure}[htbp]    
	\begin{minipage}[t]{0.5\linewidth}    
		\centering    
		\includegraphics[width=1.0\textwidth]{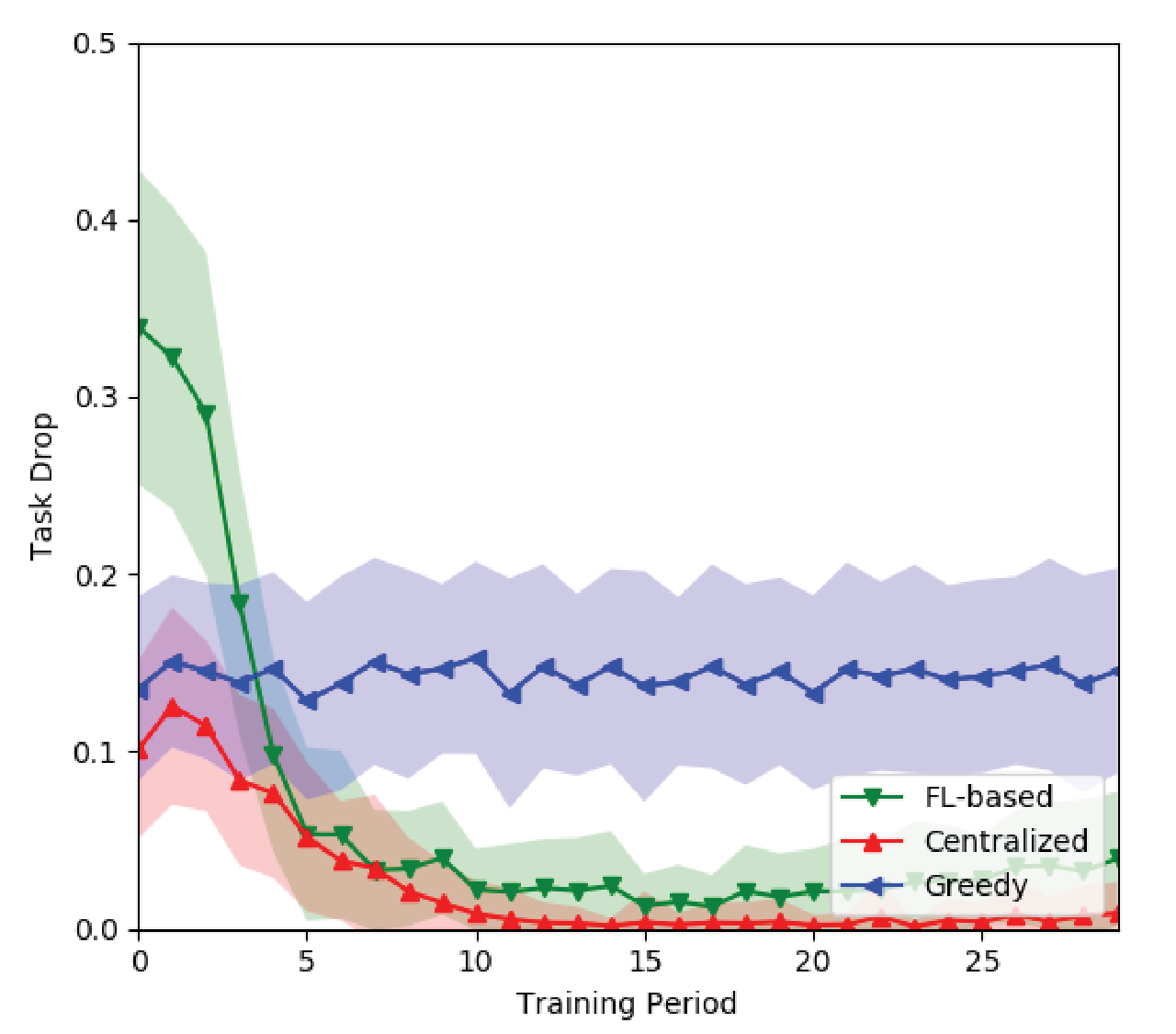}    
		\caption{Comparison of the number of task drop.}    
		\label{fig:single3}
	\end{minipage}%
	\hfill    
	\begin{minipage}[t]{0.5\linewidth}    
		\centering    
		\includegraphics[width=1.0\textwidth]{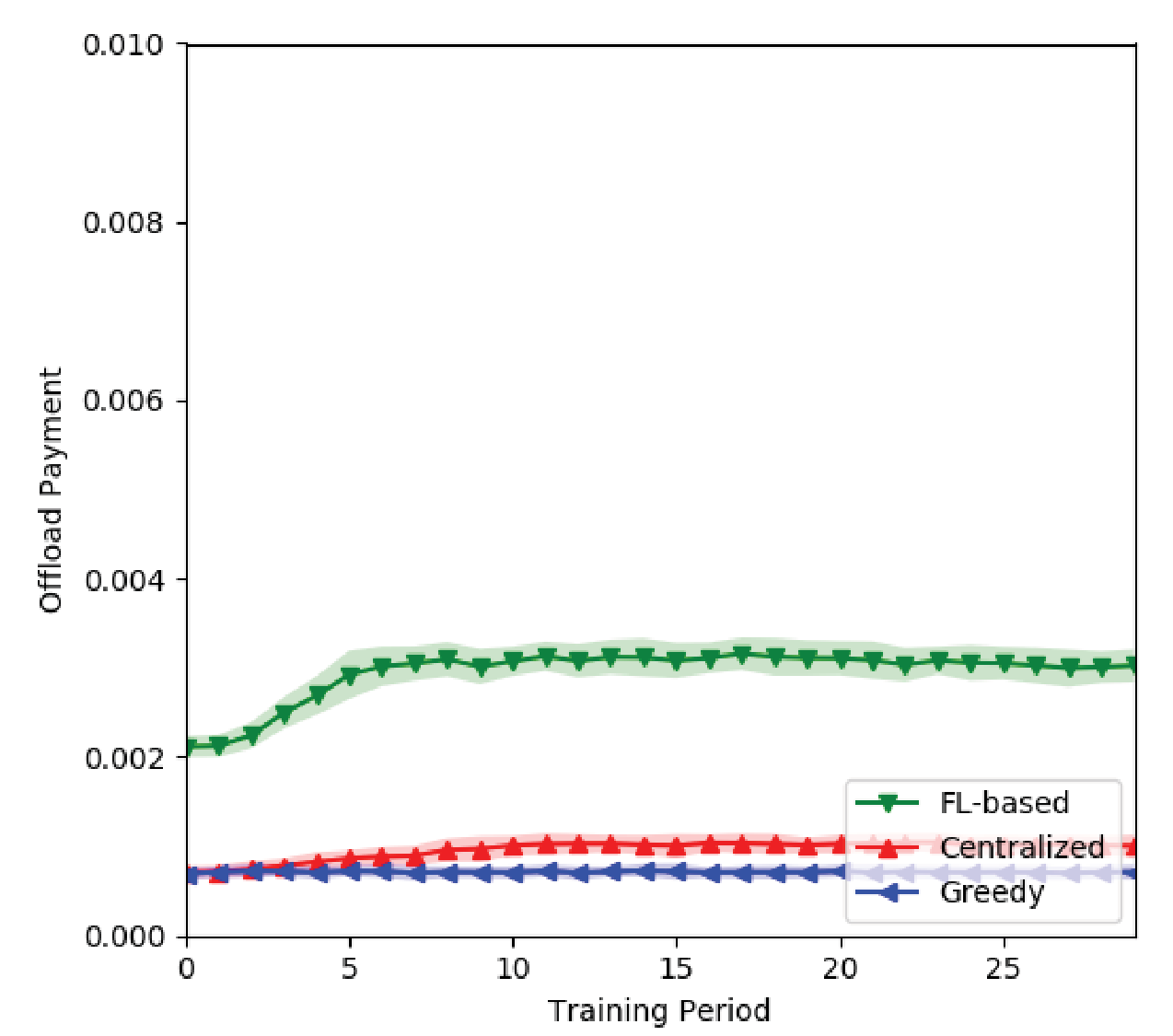}
		\caption{Comparison of offload payment.}    
		\label{fig:single1}
	\end{minipage}    
\end{figure}

Besides, due to the limited computing resources of EN, the offload payment of FL-based DRL training and centralized DRL training is also recorded. Offload payment refers to the payment required to use EN, and its value is positively related to the duration of the occupation. As shown in Fig. \ref{fig:single1}, the offload payment of FL-based DRL training and centralized DRL training are relatively stable overall, but FL-based DRL training takes up more time on EN.

This part of the experiment compares the centralized DRL training, FL-based DRL training and, the greedy strategy in several key parameters in detail. One of the prominent characteristics is that FL-based DRL training will transfer more tasks to the edge node for execution, resulting in the change of key parameters.

\subsection{Analysis of Computation Offloading Performance}
Since centralized DRL training is the benchmark for our proposed policy, we developed a further comparison with it.

Randomly soliciting three IoT devices for investigation, Fig. \ref{fig:client1_FL_utility}-\ref{fig:client3_FL_utility} and Fig. \ref{fig:centralized_training_utility} present the performance of FL-enabled DRL training and centralized DRL training, respectively. Accordingly, their training loss statics are correspondingly given in Fig. \ref{fig:client1_FL_loss}-\ref{fig:client3_FL_loss} and Fig. \ref{fig:centralized_training_loss}.
\begin{figure*}[htbp]%
	\centering
	\subfigure[Utility variation of device 1]
	{\includegraphics[width=4.2cm,height = 3.5cm]{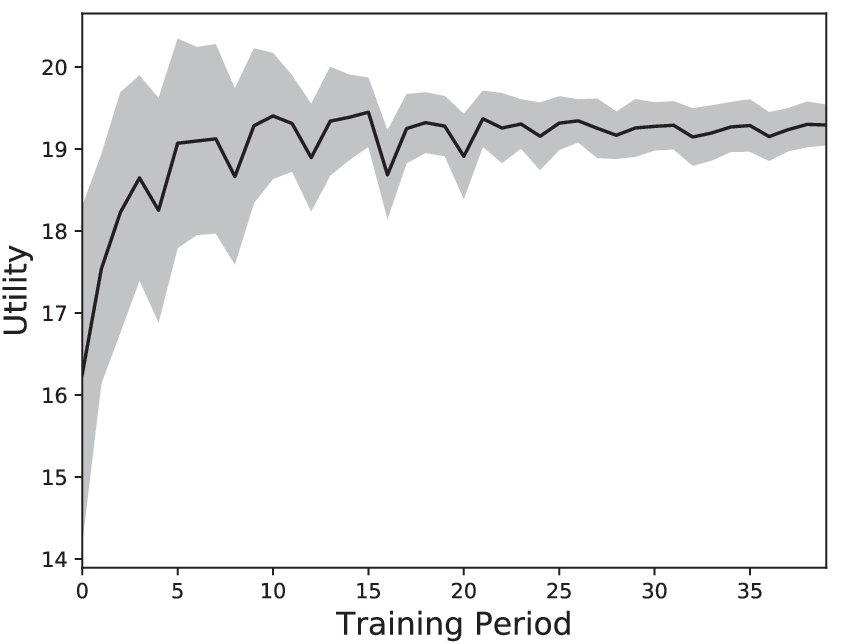}
		\label{fig:client1_FL_utility}
	}
	\subfigure[Utility variation of device 2]
	{\includegraphics[width=4.2cm,height = 3.5cm]{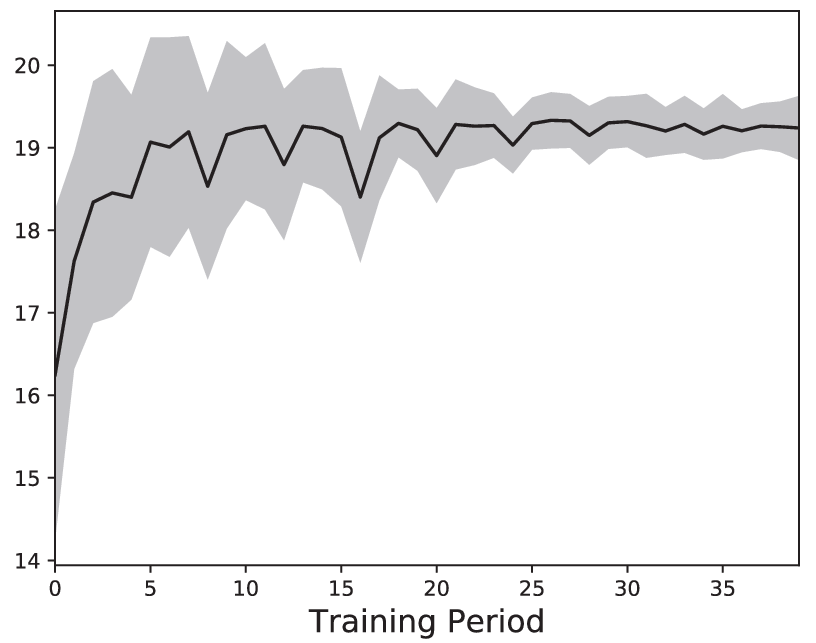}
		\label{fig:client2_FL_utility}
	}
	\subfigure[Utility variation of device 3]
	{\includegraphics[width=4.2cm,height = 3.5cm]{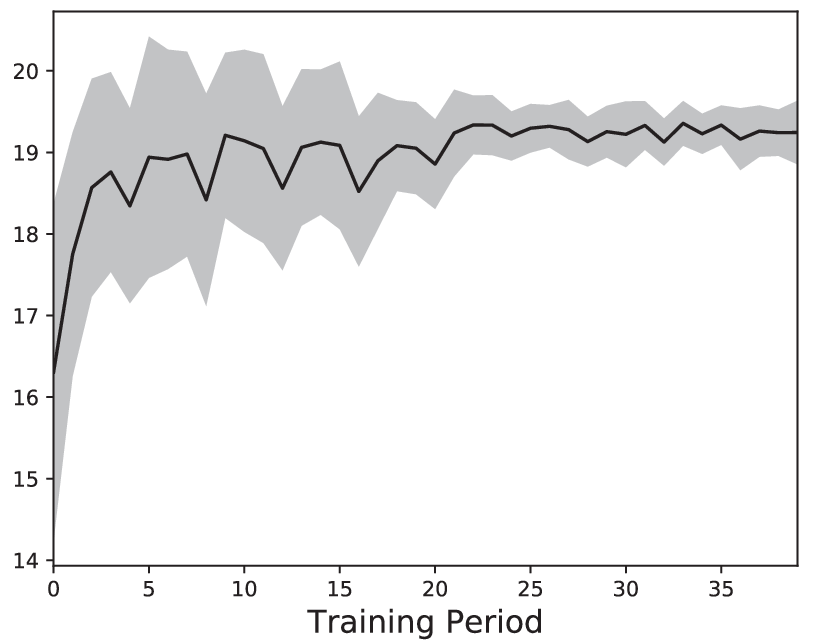}
		\label{fig:client3_FL_utility}
	}
	
	\subfigure[Loss variation of device 1]
	{\includegraphics[width=4.2cm,height = 3.5cm]{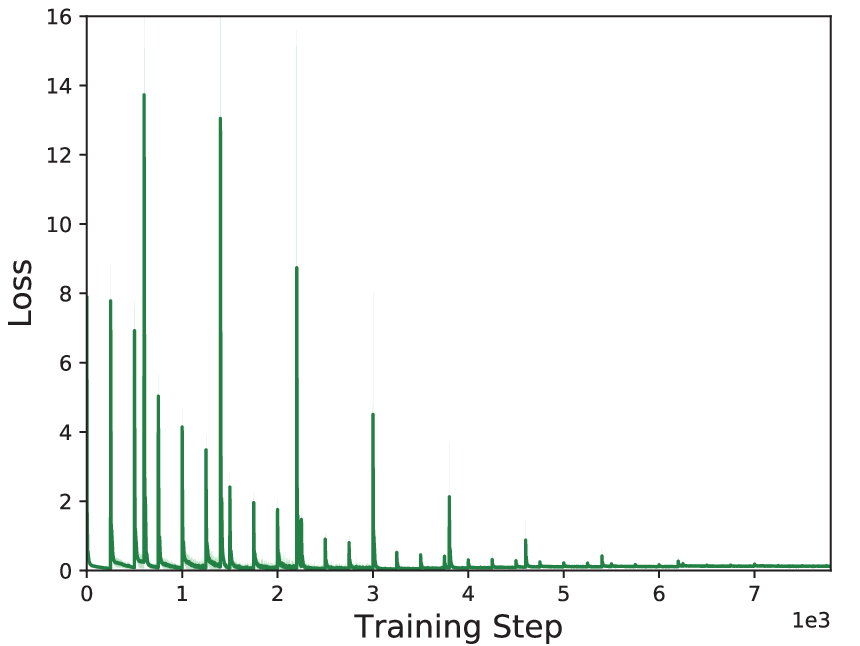}
		\label{fig:client1_FL_loss}
	}
	\subfigure[Loss variation of device 2]
	{\includegraphics[width=4.2cm,height = 3.5cm]{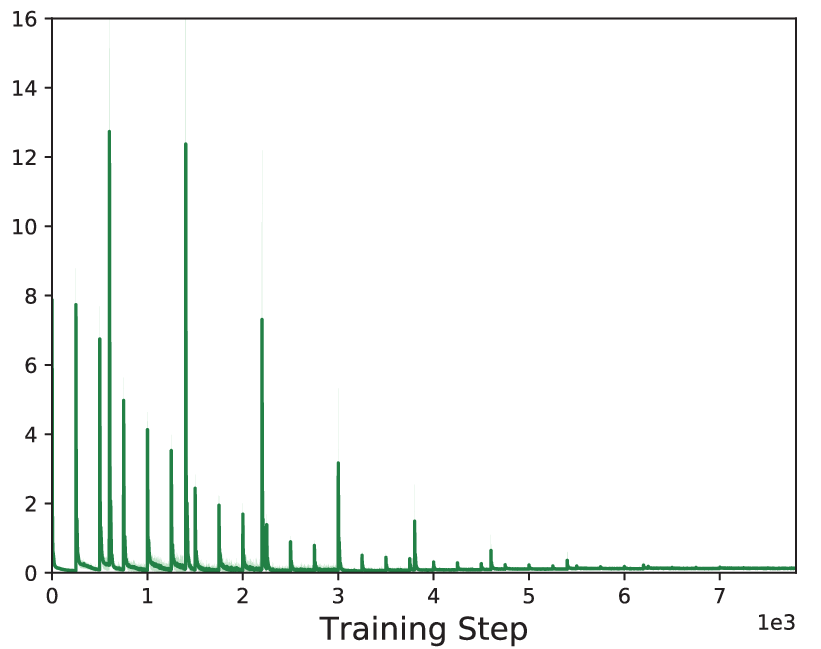}
		\label{fig:client2_FL_loss}
	}
	\subfigure[Loss variation of device 3]
	{\includegraphics[width=4.2cm,height = 3.5cm]{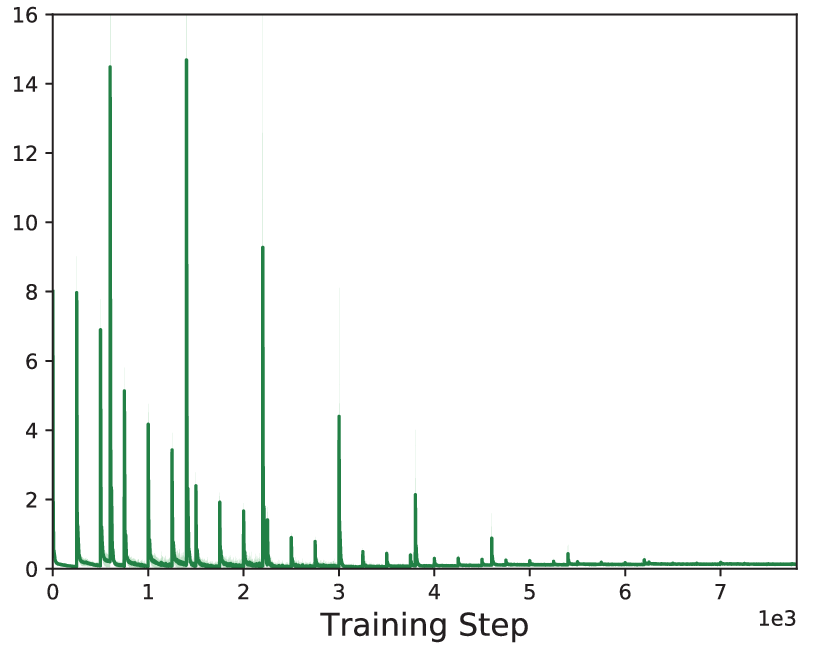}
		\label{fig:client3_FL_loss}
	}
	
	\caption{Computation offloading performance with Federated Learning-based. }
	\label{fig:comparison1}
\end{figure*}
\begin{figure*}[htbp]%
	\centering
	\subfigure[Utility variation of centralized training]
	{\includegraphics[width=0.47\linewidth]{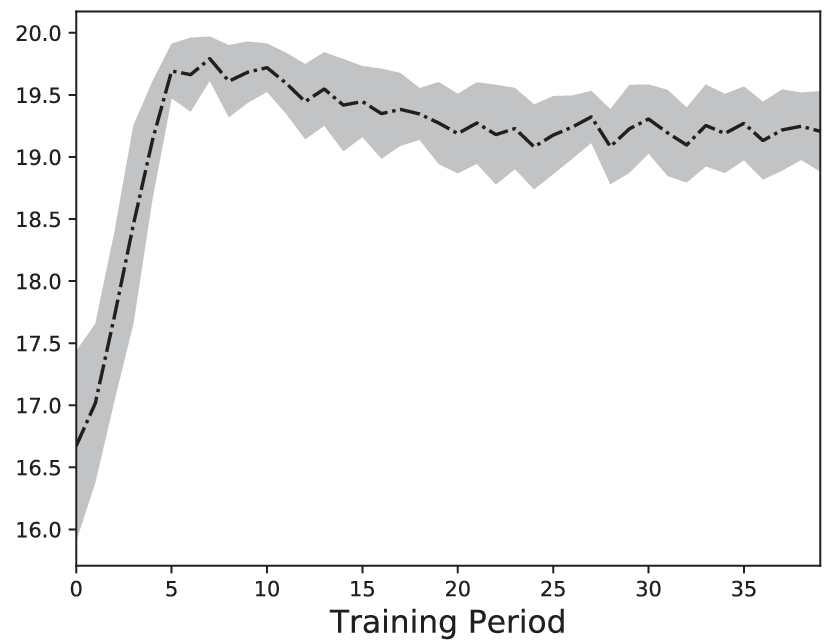}
		\label{fig:centralized_training_utility}
	}
	\subfigure[Loss variation of centralized training]
	{\includegraphics[width=0.47\linewidth]{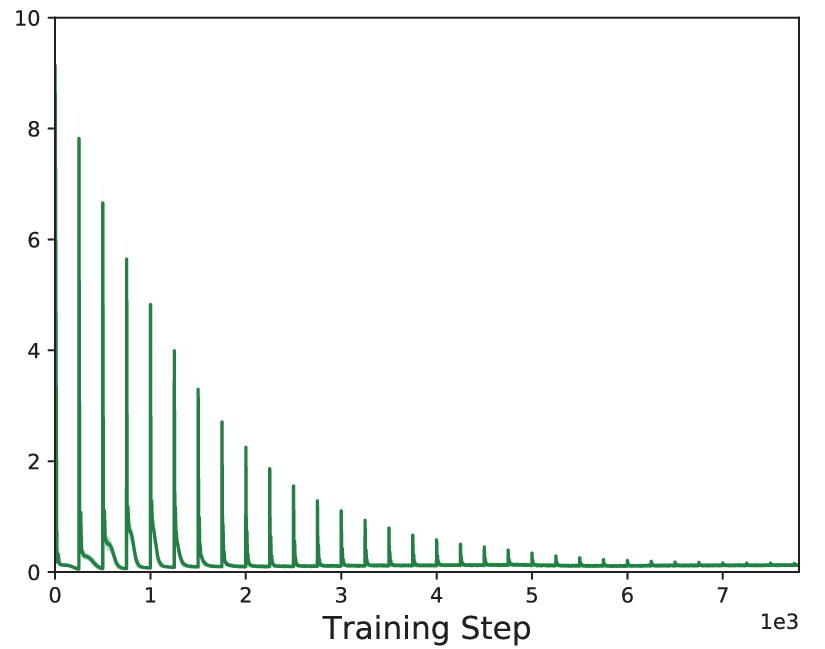}
		\label{fig:centralized_training_loss}
	}
	\caption{Computation offloading performance with centralized DRL training training.}
	\label{fig:comparison2}
\end{figure*}

The performance evaluation was as follows by a comparison of experimental results. \textbf{\textit{1)}} Obviously, the standard deviation of the training of the centralized training is less than the FL-based DRL training. This shows that the centralized training has better stability during training. In addition, as the training continues, the utility and standard deviations of FL-based DRL training continue to shrink, and eventually at the same level as centralized training. This experimental result verifies that FL-based DRL training achieves the same level of performance as centralized training and verifies its effectiveness. \textbf{\textit{2)}} Since it is assumed that the wireless channel in the centralized training can upload the training data to the EN without loss and does not cause the delay. In fact, this is impossible and this further proves the effectiveness of FL-based DRL training. Under this assumption, once trained for a period of time, FL-based DRL training can achieve the same level of performance as centralized training. Therefore, in the actual situation, the centralized training performance will be even better.

In addition, FL has two disadvantages. On the one hand, FL-based DRL training not only performs poorly during training but also requires a longer time to converge. On the other hand, FL-based DRL training is at the same level but relatively poor compared to centralized training. So how to fine-grain the scheduling of data transmission for optimization can be considered in future work.

\section{Conclusions}

In this paper, we investigate a computation offloading optimization problem. In more detail, each IoT device can make decisions about offloading tasks and allocating energy to maximize the expected long-term utility. For this, we propose an offloading algorithm based on FL and DRL and run in parallel across multiple IoT devices. On the one hand, DRL enables each IoT device to make decisions independently according to its own dynamic environment. On the other hand, FL further reduces the transmission consumption between IoT devices and EN and greatly enhances the privacy protection of data. In addition, we also carried out the necessary theoretical analysis and implemented a series of simulation experiments for this algorithm. The experimental results show that the proposed algorithm is applicable to various environments and verify the effectiveness of FL-based DRL. In the future, we will take a deep study in if there are model compression techniques with respect to DRL, and how to schedule the FL-based DRL training in a fine-grained fashion.

\bibliographystyle{ACM-Reference-Format}
\bibliography{tosn}

\end{document}